\documentclass[lettersize,journal]{IEEEtran}
\usepackage{amsmath,amsfonts}
\usepackage{algorithmic}
\usepackage{algorithm}
\usepackage{array}
\usepackage[caption=false,font=normalsize,labelfont=rm,textfont=rm]{subfig}
\usepackage{textcomp}	
\usepackage{stfloats}
\usepackage{url}
\usepackage{verbatim}
\usepackage{graphicx}
\usepackage{cite}
\usepackage{epstopdf}
\newtheorem{theorem}{Theorem}
\usepackage[
colorlinks=true,
linkcolor=blue, 
citecolor=blue, 
urlcolor=blue  
]{hyperref}

\usepackage[font=footnotesize,skip=4pt]{caption}

\begin{document}

\title{Capacity Analysis and Joint Gaussian Beam Pattern Optimization for Positioning-Assisted Communications}
	
\author{Yuanbo Liu,~\IEEEmembership{Student Member,~IEEE}, Rui Wang,~\IEEEmembership{Student Member,~IEEE}, Han Zhang, Shuojin Huang,~\IEEEmembership{Student Member,~IEEE}, Zaichen Zhang,~\IEEEmembership{Senior Member,~IEEE}, Bingcheng Zhu,~\IEEEmembership{Senior Member,~IEEE}
	
\thanks{Y. Liu, R. Wang, H. Zhang, S. Huang, Z. Zhang and B. Zhu are with the National Mobile Communications Research Laboratory, Frontiers Science Center for Mobile Information Communication and Security, Southeast University, Nanjing 210096, China. H. Zhang, Z. Zhang and B. Zhu are also with the Purple Mountain Laboratories, Nanjing 211111, China. }
\thanks{B. Zhu is the corresponding author (e-mail: zbc@seu.edu.cn).}}



\maketitle

\begin{abstract}
Positioning-assisted beamforming is a novel enabling technology for massive multiple-input multiple-output in modern wireless communications, as it provides reliable beam steering without requiring explicit channel state information. However, the channel capacity analysis under positioning error remains mathematically intractable, which limits both performance characterization and beamforming design. In this work, we derive instantaneous and ergodic channel capacity approximations in closed forms for both two-dimensional and three-dimensional positioning-assisted beamforming systems. Based on the new expressions, we provide closed-form optimal joint Gaussian beam pattern that maximizes the asymptotic ergodic capacity. Numerical results verify the theoretical capacity expressions and the optimal beam pattern. The derived expressions enable efficient beam design to maximize channel capacity and provide a theoretical basis for positioning-assisted beamforming design.
\end{abstract}

\begin{IEEEkeywords}
Beam pattern optimization, channel capacity, Gaussian beam, positioning-assisted systems.
\end{IEEEkeywords}

\section{Introduction}
\IEEEPARstart{M}{assive} multiple-input multiple-output technology can support high data rates and connection density~\cite{thallapalli2026gridless,zhang2025channel}. However, the severe path loss and poor penetration characteristics of high frequency signals introduce critical challenges to wide coverage. To mitigate these limitations, beamforming has been regarded as a fundamental technique~\cite{zhang2026fundamental,wang2021joint,zhang2024design}. It concentrates the transmitted energy onto the intended user by dynamically establishing, tracking, and adjusting directional beam. This process effectively compensates for propagation loss and reduces channel interference, which also improves overall system capacity and coverage performance.

However, the performance of conventional beamforming schemes strongly depends on accurate channel state information~(CSI), which incurs substantial training overhead and signaling latency. In frequency division duplex systems, CSI estimation requires extensive downlink pilot transmission and uplink feedback, resulting in considerable communication overhead and reduced spectral efficiency~\cite{tomasin2025two}. In time division duplex systems, CSI estimation depends on channel reciprocity calibration, and the calibration accuracy can degrade rapidly in environments with fast channel variations. Frequent CSI updates and reciprocity calibration are required to maintain beamforming performance, which further increases system overhead and latency~\cite{becirovic2022combining}. Consequently, CSI estimation becomes a critical challenge for beamforming implementations.

To overcome the challenges associated with complex CSI acquisition, positioning-assisted beamforming has emerged as a promising alternative strategy~\cite{talvitie2019positioning}. The central idea of this approach is to exploit large-scale geometric information, such as the spatial position and angle of the user, to directly determine the optimal beam directions, so as to relieve the  channel estimation~\cite{zhu2022outage,liu2026joint}. However, existing positioning-assisted beamforming systems are still subject to several practical limitations. Firstly, some methods use positioning information to reduce pilots overhead, but still rely on beam training, which results in pilot contamination and reduces the number of simultaneously supported users~\cite{zhang2026positioning,muppirisetty2018location}. To totally circumvent the requirement for CSI estimation, a CSI-free beamforming method was proposed based on receiver position~\cite{maiberger2010location}, which avoids explicit CSI estimation and feedback, but lacks theoretical analysis and relies on numerical simulations. Secondly, existing CSI-free beamforming methods still lack sufficient theoretical guidance for beam pattern design, making practical beam configuration largely dependent on empirical design, which may result in beam misalignment, service interruption, or unnecessary energy consumption~\cite{he2026bistatic}. To address this issue, closed-form capacity expressions were derived in~\cite{cao2026location}, while the results relied on oversimplified positioning error models that may not accurately capture practical positioning errors. The impact of positioning errors on channel capacity was investigated in~\cite{talvitie2020beamformed}, demonstrating that beam optimization is essential for maximizing channel capacity under a given positioning accuracy. However, closed-form optimal beam design was not established, which limits the direct application of these analytical results to practical beam pattern design. Thirdly, although closed-form optimal beam patterns were subsequently derived in~\cite{song2024position,jing2026beamforming} to maximize spectral efficiency, these results are limited to two-dimensional~(2D) scenarios. Three-dimensional~(3D) positioning-assisted beam design and spectral efficiency optimization were investigated in~\cite{shi2023joint,srinivasan2021airplane,lu2020positioning}, but the corresponding beam design relied on numerical optimization, making it computationally demanding. So far, it still remains an open fundamental problem to maximize the channel capacity through simple beam pattern design in 2D and 3D scenarios. 

In this work, we investigate the channel capacity of positioning-assisted beamforming systems and develop the capacity-maximizing beam pattern. The contributions of this work can be summarized as follows:
\begin{itemize}
	\item We develop a unified channel model for positioning-assisted beamforming by jointly characterizing positioning errors and beam patterns in both 2D and 3D scenarios.
	\item Based on the proposed channel model, we derive the closed-form expressions of instantaneous and ergodic capacity, considering the link distance, transmit power, positioning error and beam pattern.
	\item Based on the derived ergodic capacity expressions, the closed-form optimal beam pattern is developed to maximize the ergodic capacity in both 2D and 3D scenarios.
\end{itemize}

This paper is organized as follows. In Section~\ref{System Model}, we present a system model for positioning-assisted beamforming systems considering the positioning errors. In Section~\ref{Capacity Derivation and Analysis}, we derive closed-form ergodic channel capacity in both 2D and 3D scenarios. In Section~\ref{Beam Pattern Optimization}, optimal beam pattern for maximizing the ergodic capacity is derived in closed form. In Section~\ref{Numerical Results}, numerical results of the analytical results are presented. Finally, Section~\ref{Conclusion} draws some concluding remarks.

\section{System Model}\label{System Model}
In a typical positioning-assisted beam transmitting model, the received power $P_r$ can be expressed as~\cite[eq.~(1)]{friis1946note}
\begin{equation}\label{Eq_Pr}
	P_r = P_{\max} A_e G_\theta G_d
\end{equation}
where $P_{\max}$ is the peak power density of the transmitter antenna; $A_e$ is the effective area of the receiver antenna defined as~\cite[eq.~(2-110)]{milligan2005modern}
\begin{equation}
	A_e = \frac{G_r \lambda^2}{4 \pi}
\end{equation}
where $G_r$ is the receiver antenna gain and $\lambda$ is the wavelength; $G_\theta \leq 1$ represents the normalized transmitter antenna gain; $G_d$ is the free space propagation loss, which can be calculated as~\cite[eq.~(2-11)]{milligan2005modern}
\begin{equation}\label{Eq_Gd}
	G_d = \frac{1}{4 \pi d^2 }
\end{equation}
where $d$ is the link distance between the transmitter and receiver antenna.

\subsection{2D Positioning-Assisted Beamforming Model}
For the 2D beam model, $G_\theta$ in~\eqref{Eq_Pr} can be represented as~\cite[eq.~(3.42)]{skolnik1980introduction}
\begin{equation}\label{Eq_2D_Gtheta}
	G_\theta = 10^{-1.2 \frac{\theta^2}{\theta_{3 \text{dB}}^2}}
\end{equation}
where $\theta$ denotes the error angle between the antenna boresight and the direction of arrival, and $\theta_{3\text{dB}}$ represents the 3-dB beamwidth of the antenna pattern. In practical implementations, the transmit power $P_t$ remains constant and can be related to $P_{\max}$ as
\begin{equation}\label{Eq_2D_Pt}
	P_t = \int_{-\pi}^{\pi} P_{\max} G_\theta d\theta \approx P_{\max} \theta_{3\text{dB}}\sqrt{ \frac{\pi}{1.2 \ln 10}}
\end{equation}
and $P_{\max}$ can be approximated as
\begin{equation}\label{Eq_2D_Pmax}
	P_{\max} \approx \frac{P_t}{\theta_{3\text{dB}}} \sqrt{\frac{1.2 \ln 10}{\pi}}.
\end{equation}
A typical positioning-assisted beam model is shown in Fig.~\ref{Figure_2D_Beam_System_Model}. Without loss of generality, the receiver is located at $\mathbf{p}_u^{2\text{D}} = \left[ 0, d \right]^T$ and its estimated position is $\hat{\mathbf{p}}_u^{2\text{D}} = \left[ \hat{x}_u, \hat{y}_u \right]^T$. The transmitter is located at the origin and the error angle $\theta$ can be calculated as
\begin{equation}\label{Eq_2D_theta}
	\theta = \mathrm{atan2} \left( \hat{x}_u, \hat{y}_u  \right) 
\end{equation}
where $\mathrm{atan2} \left( \cdot, \cdot \right)$ is the four-quadrant inverse tangent function. 
\begin{figure}
	\centering
	\includegraphics[trim=2pt 1pt 2pt 1pt,clip,width=0.4\textwidth]{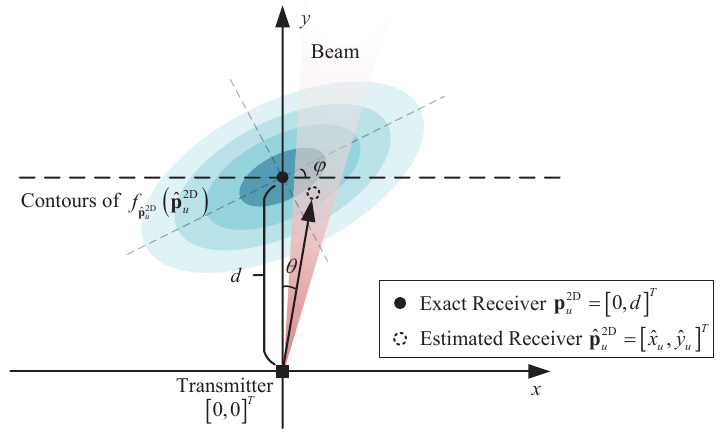}\\
	\caption{Typical 2D positioning-assisted beam transmitting model. The transmitter is located at the origin, and the exact position is $\mathbf{p}_u^{2\text{D}} = \left[0, d\right]^T$; the estimated position is $\hat{\mathbf{p}}_u^{2\text{D}} = \left[\hat{x}_u, \hat{y}_u\right]^T$. The ellipses represent the contours of PDF $f_{\hat{\mathbf{p}}_u^{2\text{D}}} \left( \hat{\mathbf{p}}_u^{2\text{D}} \right)$ and the beam is directed toward the estimated receiver position.
	}\label{Figure_2D_Beam_System_Model}
\end{figure}
It is shown that $\hat{\mathbf{p}}_u^{2\text{D}}$ follows a bivariate Gaussian distribution with its probability density function~(PDF) denoted by~\cite{hyun2021adaptive}
\begin{equation}\label{Eq_2D_PDF}
	\begin{aligned}
		&f_{\hat{\mathbf{p}}_u^{2\text{D}}} \left( \hat{\mathbf{p}}_u^{2\text{D}} \right) = \frac{1}{2 \pi \left|\mathbf{\Sigma}^{2\text{D}}\right|^{\frac{1}{2}}} \\
		&\times \exp \left( -\frac{ \left( \hat{\mathbf{p}}_u^{2\text{D}} - \mathbf{p}_u^{2\text{D}} \right)^T \left( \mathbf{\Sigma}^{2\text{D}} \right)^{-1} \left( \hat{\mathbf{p}}_u^{2\text{D}} - \mathbf{p}_u^{2\text{D}} \right) }{2} \right)
	\end{aligned}
\end{equation}
where $\mathbf{\Sigma}^{2\text{D}} = E \left[ \left( \hat {\mathbf{p}}_u^{2\text{D}} - \mathbf{p}_u^{2\text{D}} \right) \left( \hat {\mathbf{p}}_u^{2\text{D}} - \mathbf{p}_u^{2\text{D}} \right)^T \right] \in \mathbb{R}^{2 \times 2}$ denotes the 2D covariance matrix of the positioning error and can be diagonalized through the eigenvalue decomposition as
\begin{equation}
	\begin{aligned}
		\mathbf{\Sigma}^{2\text{D}} & = \mathbf{R}^{2\text{D}} \mathbf{\Lambda}^{2\text{D}} \left( \mathbf{R}^{2\text{D}}\right)^{T} \\
		& =
		\begin{bmatrix}
			\cos\varphi & -\sin\varphi \\
			\sin\varphi & \cos\varphi
		\end{bmatrix}
		\begin{bmatrix}
			\sigma_1^2 & 0 \\
			0 & \sigma_2^2
		\end{bmatrix}
		\begin{bmatrix}
			\cos\varphi & \sin\varphi \\
			-\sin\varphi & \cos\varphi
		\end{bmatrix}
	\end{aligned}
\end{equation}
where $\mathbf{R}^{2\text{D}}$ denotes the 2D rotation matrix associated with the directional angle $\varphi$; $\sigma_1^2$ and $\sigma_2^2$ represent the error variances along the major and minor axes of the PDF contours.

\subsection{3D Positioning-Assisted Beamforming Model}
In 3D cases, the normalized antenna gain is jointly determined by the horizontal and vertical components. Accordingly, eq.~\eqref{Eq_2D_Gtheta} can be generalized to~\cite{skolnik1980introduction}
\begin{equation}\label{Eq_3D_Gtheta_phi}
	G_{\left( \theta, \phi \right)} =  10^{-1.2 \left[ \theta, \phi \right] \mathbf{A} \left[ \theta, \phi \right]^T}
\end{equation}
where $\theta$ and $\phi$ denote the angular deviation between the antenna boresight and the direction of arrival in the horizontal and vertical directions; $\mathbf{A}$ is a positive definite matrix defined as
\begin{equation}\label{Eq_3D_A}
	\mathbf{A} = 
	\begin{bmatrix}
		\frac{1}{\theta_{3 \text{dB}}^2} & m \\
		m & \frac{1}{\phi_{3 \text{dB}}^2}
	\end{bmatrix},
\end{equation}
which represents the 3D beam pattern in terms of its principal axes and orientation, where $\theta_{3 \text{dB}}$ and $\phi_{3 \text{dB}}$ denote the horizontal and vertical 3-dB beamwidth; $m$ represents the beam rotation. In 3D cases, the transmit power $P_t$ is computed as
\begin{equation}\label{Eq_3D_Pt}
	\begin{aligned}
		P_t &= \int_{-\pi / 2}^{\pi / 2} \int_{-\pi / 2}^{\pi / 2} P_{\max} G_{\left( \theta, \phi \right)} \, d\theta \, d\phi \\
		& \approx \frac{\pi \theta_{3\text{dB}} \phi_{3\text{dB}} P_{\max} }{1.2 \ln 10 \sqrt{1 - m^2 \theta_{3\text{dB}}^2 \phi_{3\text{dB}}^2 }},
	\end{aligned}
\end{equation}
and $P_{\max}$ can be further approximated as
\begin{equation}\label{Eq_3D_Pmax}
	P_{\max} \approx \frac{1.2 \ln 10 P_t \sqrt{1 - m^2 \theta_{3\text{dB}}^2 \phi_{3\text{dB}}^2 }}{\pi \theta_{3\text{dB}} \phi_{3\text{dB}}}.
\end{equation}
Figure~\ref{Figure_3D_Beam_System_Model} exhibits a general 3D positioning-assisted beam model. Assuming that the transmitter is located at the origin, the exact receiver and its estimated position are $\mathbf{p}_u^{3\text{D}} = \left[0, d, 0 \right]^T$ and $\hat{\mathbf{p}}_u^{3\text{D}} = \left[\hat{x}_u, \hat{y}_u, \hat{z}_u \right]^T$, where $d$ is the link distance. $\theta$ and $\phi$ can be calculated as
\begin{equation}\label{Eq_3D_theta}
	\theta  = \mathrm{atan2} \left( \hat{x}_u, \hat{y}_u \right) \in \left( -\pi, \pi \right]
\end{equation}
and
\begin{equation}\label{Eq_3D_phi}
	\phi = \arctan \frac{\hat{z}_u}{\sqrt{\hat{x}_u^2 + \hat{y}_u^2}} \in \left( -\pi/2, \pi/2 \right].
\end{equation}
Since $\hat{\mathbf{p}}_u^{3\text{D}}$ follows a trivariate Gaussian distribution, its PDF is given by
\begin{equation}\label{Eq_3D_PDF}
	\begin{aligned}
		&f_{\hat{\mathbf{p}}_u^{3\text{D}}} \left( \hat{\mathbf{p}}_u^{3\text{D}} \right) = \frac{1}{\left( 2 \pi \right)^{\frac{3}{2}} \left|\mathbf{\Sigma}^{3\text{D}}\right|^{\frac{1}{2}}} \\
		&\times \exp \left( -\frac{ \left( \hat{\mathbf{p}}_u^{3\text{D}} - \mathbf{p}_u^{3\text{D}} \right)^T \left( \mathbf{\Sigma}^{3\text{D}} \right)^{-1} \left( \hat{\mathbf{p}}_u^{3\text{D}} - \mathbf{p}_u^{3\text{D}} \right) }{2} \right)
	\end{aligned}
\end{equation}
where $\mathbf{\Sigma}^{3\text{D}} = E \left[ \left( \hat {\mathbf{p}}_u^{3\text{D}} - \mathbf{p}_u^{3\text{D}} \right) \left( \hat {\mathbf{p}}_u^{3\text{D}} - \mathbf{p}_u^{3\text{D}} \right)^T \right] \in \mathbb{R}^{3 \times 3}$ represents the 3D covariance matrix of the positioning error and can be expressed by spectral decomposition as $\mathbf{\Sigma}^{3\text{D}} = \mathbf{R}^{3\text{D}} \mathbf{\Lambda}^{3\text{D}} \left( \mathbf{R}^{3\text{D}} \right)^T$ where  $\mathbf{R}^{3\text{D}} = \mathbf{R}_z \left( \varphi_z \right) \mathbf{R}_y \left( \varphi_y \right) \mathbf{R}_x \left( \varphi_x \right)$ represents the 3D rotation matrix parameterized by directional angle $\varphi_x, \varphi_y, \varphi_z$ with $\mathbf{R}_x \left( \cdot \right), \mathbf{R}_y \left( \cdot \right), \mathbf{R}_z \left( \cdot \right)$ denoting the elementary rotation matrices about the $x$, $y$, $z$ axes, respectively~\cite{mahony2012multirotor}. $\mathbf{\Lambda}^{3\text{D}} = \mathrm{diag}(\sigma_1^2, \sigma_2^2, \sigma_3^2)$ contains the eigenvalues with $\sigma_1^2 > \sigma_2^2 > \sigma_3^2$, which determine the variances of the positioning errors along different axes of the PDF ellipsoidal contour surfaces.

\begin{figure}
	\centering
	\includegraphics[trim=2pt 1pt 2pt 1pt,clip,width=0.4\textwidth]{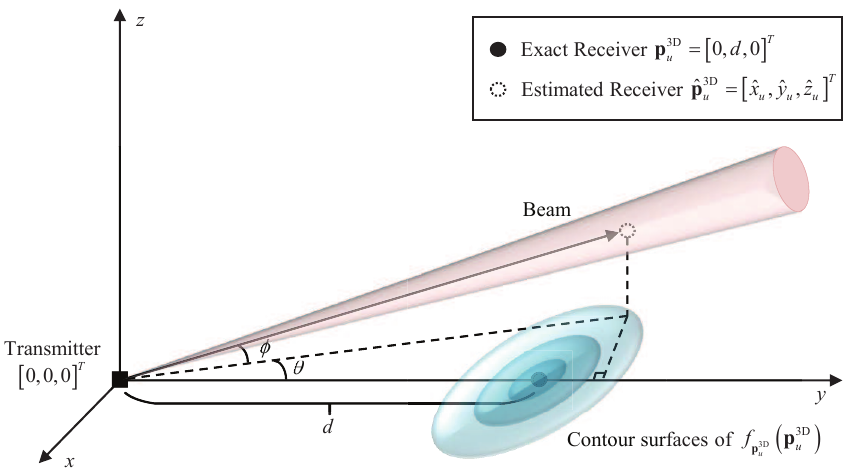}\\
	\caption{General 3D positioning-assisted beam transmitting model. The transmitter is located at the origin, the exact position is $\mathbf{p}_u^{3\text{D}} = \left[0, d, 0 \right]^T$; the estimated position is $\hat{\mathbf{p}}_u^{3\text{D}} = \left[\hat{x}_u, \hat{y}_u, \hat{z}_u \right]^T$. The ellipses represent the contour surfaces of PDF $f_{\hat{\mathbf{p}}_u^{3\text{D}}} \left( \hat{\mathbf{p}}_u^{3\text{D}} \right)$ and the beam is directed toward the estimated receiver position.
	}\label{Figure_3D_Beam_System_Model}
\end{figure}

\section{2D Asymptotic Capacity Expressions}\label{Capacity Derivation and Analysis}
In this section, we derive the channel capacity of the positioning-assisted beamforming system in 2D cases.

\subsection{2D Ergodic Capacity in an Integral Form}
The instantaneous capacity for complex baseband channel of 2D positioning-assisted systems can be expressed by substituting~\eqref{Eq_Pr}-\eqref{Eq_2D_Gtheta} into the Shannon capacity equation as
\begin{equation}\label{Eq_2D_Cinst_theta}
	C_{\text{inst}}^{2\text{D}} = \log_2 \left( 1 + \frac{P_r}{N_0} \right) = \log_2 \left( 1 + \frac{P_{\max} A_e}{4 \pi d^2 N_0}\cdot 10^{-1.2 \frac{\theta^2}{\theta_{3 \text{dB}}^2}} \right)
\end{equation}
where $N_0$ denotes the power of the additive Gaussian noise. Since the positioning error is much smaller than the link distance, we can assume $(\hat{y}_u - d) / d \rightarrow 0$. Applying dual-variable Taylor expansion to $\theta$ in~\eqref{Eq_2D_theta} at $\left( \hat{x}_u, \hat{y}_u \right) = \left( 0,d \right)$, we can simplify $\theta$ in~\eqref{Eq_2D_Cinst_theta} as
\begin{equation}\label{Eq_2D_theta_Taylor}
	\theta = \mathrm{atan2} \left( \hat{x}_u, \hat{y}_u  \right) = \frac{\hat{x}_u}{d} + o \left( || \left( \hat{x}_u,  \left( \hat{y}_u - d \right) \right) || \right)
\end{equation}
when $\left( \hat{x}_u, \hat{y}_u \right) \to \left( 0,d \right)$. Substituting~\eqref{Eq_2D_theta_Taylor} into~\eqref{Eq_2D_Cinst_theta}, we can approximate $C_{\text{inst}}^{2\text{D}}$ as
\begin{equation}\label{Eq_2D_Cinst_x}
	C_{\text{inst}}^{2\text{D}} \! \left( \hat{x}_u \right) \! = \! \log_2 \! \left( \! 1 \! + \! \frac{P_{\max} A_e}{ 4 \pi d^2 N_0} \! \cdot \! 10^{\! -1.2 \frac{\hat{x}_u^2}{d^2 \! \theta_{3 \text{dB}}^2} +  o \left( \! ||   \left( \hat{x}_u,  \left( \hat{y}_u - d \right) \! \right)  ||^{\! 2} \! \right) } \!\! \right).
\end{equation}
Combining~\eqref{Eq_2D_PDF} and~\eqref{Eq_2D_Cinst_x}, the ergodic channel capacity can be calculated by marginalizing $\hat{y}_u$ as 
\begin{equation}\label{Eq_2D_Cerg_exact}
	\begin{aligned}
		C_{\text{erg}}^{2\text{D}} \!\! = \!\!\! \iint_{-\infty}^{+\infty} \!\!\! C_{\text{inst}}^{2\text{D}} \! \left( x \right) \! \cdot \!\! f_{\hat{\mathbf{p}}_u^{2\text{D}}} \! \left( \hat{\mathbf{p}}_u^{2\text{D}} \right) \! dxdy \! = \!\! \int_{-\infty}^{+\infty} \!\!\! C_{\text{inst}}^{2\text{D}} \! \left( x \right) \! \cdot \!\! f_{\hat{x}_u} \!\! \left( x \right) \! dx
	\end{aligned}
\end{equation}
where $f_{\hat{x}_u} \left( x \right)$ is the marginal PDF of $\hat{x}_u$ as
\begin{equation}\label{Eq_2D_fx}
	f_{\hat{x}_u} \left( x \right) = \frac{1}{\sqrt{2 \pi \sigma_x^2}} \exp \left( -\frac{x^2}{2\sigma_x^2} \right)
\end{equation}
and $\sigma_x^2$ is the variance of  $\hat{x}_u$.

\subsection{Asymptotic Conditions and Capacity Expressions}\label{Two Asymptotic Cases}
Since the ergodic capacity expression in~\eqref{Eq_2D_Cerg_exact} is difficult to evaluate analytically, we present two expressions of~\eqref{Eq_2D_Cerg_exact} under two asymptotic cases and provide their proofs.

\subsubsection{Case 1: $\theta_{3 \rm{dB}} d / \sigma_x \to 0$}
This asymptotic case corresponds to a narrow beamwidth $\theta_{3 \text{dB}}$ much smaller than $\sigma_x / d$.
\begin{theorem}\label{Theorem_Asymptotic_Case1}
	Under the asymptotic case $\theta_{3 \text{dB}} d / \sigma_x \to 0$, the ergodic capacity converges to
	\begin{equation}\label{Eq_2D_Asymptotic_Case1_narrow_beam}
		\lim_{\frac{\theta_{3 \text{dB}} d}{\sigma_x} \to 0} C_{\text{erg}}^{2\text{D}} =  f_{\hat{x}_u} (0) \cdot  \int_{-\infty}^{+\infty} C_{\text{inst}}^{2\text{D}} \left( x \right) dx.
	\end{equation}
\end{theorem}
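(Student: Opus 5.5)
The plan is to rescale the integration variable so that the narrow-beam limit appears as a concentration of $C_{\text{inst}}^{2\text{D}}(x)$ on a region where $f_{\hat{x}_u}$ is essentially constant. By~\eqref{Eq_2D_Cinst_x}, with the $o(\cdot)$ term dropped as in~\eqref{Eq_2D_Cerg_exact}, $C_{\text{inst}}^{2\text{D}}(x)$ depends on $x$ only through $x/(d\theta_{3\text{dB}})$. Writing $\gamma = P_{\max}A_e/(4\pi d^2 N_0)$, we have
\begin{equation*}
C_{\text{inst}}^{2\text{D}}(x) = g\!\left(\frac{x}{d\theta_{3\text{dB}}}\right), \qquad g(u) = \log_2\!\left(1+\gamma\, 10^{-1.2u^2}\right).
\end{equation*}
The function $g$ is nonnegative, bounded by $\log_2(1+\gamma)$, and satisfies $g(u)\le \gamma\,10^{-1.2u^2}/\ln 2$, since $\log_2(1+t)\le t/\ln 2$. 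Hence $g\in L^1(\mathbb{R})$.

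Substituting $x = d\theta_{3\text{dB}} u$ turns both sides of~\eqref{Eq_2D_Asymptotic_Case1_narrow_beam} into integrals over $u$. The left side becomes
\begin{equation*}
C_{\text{erg}}^{2\text{D}} = d\theta_{3\text{dB}}\int_{-\infty}^{+\infty} g(u)\, f_{\hat{x}_u}(d\theta_{3\text{dB}}u)\,du,
\end{equation*}
and the right side is $f_{\hat{x}_u}(0)\, d\theta_{3\text{dB}}\int g(u)\,du$. The difference of the two sides, normalized by the right side, is therefore
\begin{equation*}
\frac{\int g(u)\left[\exp\!\left(-\tfrac{1}{2}\epsilon^2 u^2\right)-1\right]du}{\int g(u)\,du},
\qquad \epsilon = \frac{\theta_{3\text{dB}}d}{\sigma_x}.
\end{equation*}
Here I use~\eqref{Eq_2D_fx} to write $f_{\hat{x}_u}(d\theta_{3\text{dB}}u)/f_{\hat{x}_u}(0)=\exp(-\epsilon^2u^2/2)$.

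As $\epsilon\to0$ the integrand converges pointwise to $0$, and it is dominated by $g\in L^1$. Dominated convergence therefore sends this ratio to $0$. Equivalently, $C_{\text{erg}}^{2\text{D}}/\big(f_{\hat{x}_u}(0)\int C_{\text{inst}}^{2\text{D}}dx\big)\to1$, which is the precise meaning of~\eqref{Eq_2D_Asymptotic_Case1_narrow_beam}. If an explicit rate is wanted, $1-e^{-t}\le t$ bounds the relative error by $\tfrac{\epsilon^2}{2}\int u^2 g\,/\int g = O(\epsilon^2)$, which is finite by the Gaussian tail bound on $g$.

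The main obstacle is interpretive rather than analytic. Both sides of~\eqref{Eq_2D_Asymptotic_Case1_narrow_beam} depend on $\theta_{3\text{dB}}$, and $\gamma$ also depends on it through $P_{\max}$ in~\eqref{Eq_2D_Pmax}. So the "limit" must be read as asymptotic equivalence, that is, a ratio tending to one, taken with $\gamma$ fixed or at least bounded. I would state this explicitly.

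If $\gamma$ is allowed to grow as $\theta_{3\text{dB}}\to0$, the argument needs more care, since the dominating function then changes. In that case I would normalize by $\int g$ and check that $\int u^2 g\,du/\int g\,du$ grows only like $\log\gamma$. This follows because $g$ is essentially $\log_2\gamma-1.2u^2\log_2 10$ on $|u|\lesssim\sqrt{\log\gamma}$ and small elsewhere. The relative error is then $O(\epsilon^2\log\gamma)$, which still vanishes under the stated asymptotic regime.

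I would also remark that the $o(\cdot)$ remainder from~\eqref{Eq_2D_theta_Taylor} has already been absorbed into the model~\eqref{Eq_2D_Cerg_exact}, so it does not enter this step.
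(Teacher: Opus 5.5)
Your proof is correct, but it takes a different route from the paper's.

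\textbf{The paper's route.} The paper proves the same ratio-to-one statement by a squeeze argument.
The upper bound uses only $f_{\hat{x}_u}(x)\le f_{\hat{x}_u}(0)$.
The lower bound discards the region $|x|>\delta$, with the intermediate cutoff $\delta=\sqrt{\sigma_x\theta_{3\text{dB}}d}$. This $\delta$ is the geometric mean of the beam footprint $\theta_{3\text{dB}}d$ and the error spread $\sigma_x$.
The result is ratio $\ge T_1T_2$, where:
\begin{itemize}
\item $T_1=f_{\hat{x}_u}(\delta)/f_{\hat{x}_u}(0)=e^{-\epsilon/2}\to1$;
\item $T_2$ is the fraction of $\int C_{\text{inst}}^{2\text{D}}$ carried by $|x|\le\delta$, which tends to one because $\delta/(\theta_{3\text{dB}}d)=\epsilon^{-1/2}\to\infty$.
\end{itemize}
That proof is elementary. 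It uses only that $f_{\hat{x}_u}$ is even and decreasing on $[0,+\infty)$, together with integrability of $C_{\text{inst}}^{2\text{D}}$.

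\textbf{Your route.} The rescaling $x=d\theta_{3\text{dB}}u$ collapses everything into the single ratio $\int g(u)e^{-\epsilon^2u^2/2}\,du/\int g(u)\,du$. Dominated convergence then closes the argument at once. The inequality $1-e^{-t}\le t$ upgrades this to the explicit squeeze $1-\tfrac{\epsilon^2}{2}\int u^2g/\int g\le\text{ratio}\le1$.

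\textbf{What your route buys.} It gives two things the paper's proof does not.
\begin{itemize}
\item \emph{A rate of the true order $\epsilon^2$.} With the paper's choice of $\delta$, the lower bound made quantitative gives only $1-O(\epsilon)$, because $T_1=e^{-\epsilon/2}$.
\item \emph{Explicit control of the dependence on $\gamma$.} $\gamma$ varies with $\theta_{3\text{dB}}$ through \eqref{Eq_2D_Pmax}, and also with $d$. The paper's step $T_2\to1$ tacitly treats $\gamma$ as fixed. Your $O(\epsilon^2\log\gamma)$ estimate is only sketched, but it is easily made rigorous by splitting the $u$-axis at $|u|\asymp\sqrt{\log\gamma}$. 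It shows the conclusion survives this coupling.
\end{itemize}

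\textbf{The cost.} The quantitative bound relies on the Gaussian form of $f_{\hat{x}_u}$, or at least on a bound of the form $f_{\hat{x}_u}(x)\ge f_{\hat{x}_u}(0)(1-Cx^2)$. The purely qualitative dominated-convergence step needs only a bounded density that is continuous and positive at the origin, not a symmetric unimodal one.
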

The proof of Theorem~\ref{Theorem_Asymptotic_Case1} is given in Appendix~\ref{Appendix_Theorem1}. Theorem~\ref{Theorem_Asymptotic_Case1} shows that, for sufficiently narrow beamwidth, $C_{\text{inst}}^{2\text{D}} \left( x \right)$ is almost within a small neighborhood of the origin, and behaves like the sampling pulse, and thus the PDF function $f_{\hat{x}_u} \left( x \right)$ can be approximated by $f_{\hat{x}_u} \left( 0 \right)$.

\subsubsection{Case 2: $\theta_{3 \rm{dB}} d / \sigma_x \to +\infty$}
This asymptotic case corresponds to negligible positioning error and the beamwidth $\theta_{3 \text{dB}}$ is much larger than $\sigma_x / d$.
\begin{theorem}\label{Theorem_Asymptotic_Case2}
	Under the asymptotic case $\theta_{3 \text{dB}} d / \sigma_x \to +\infty$, the ergodic capacity converges to
	\begin{equation}\label{Eq_2D_Asymptotic_Case2_wide_beam}
		\lim_{\frac{\theta_{3 \text{dB}} d}{\sigma_x} \to +\infty} C_{\text{erg}}^{2\text{D}} =  C_{\text{inst}}^{2\text{D}} \left( 0 \right) \cdot  \int_{-\infty}^{+\infty}  f_{\hat{x}_u} (x) dx = C_{\text{inst}}^{2\text{D}} \left( 0 \right).
	\end{equation}
\end{theorem}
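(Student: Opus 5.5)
The plan is to rescale the integral in \eqref{Eq_2D_Cerg_exact} so that the ratio $\theta_{3\text{dB}} d/\sigma_x$ appears explicitly, and then pass to the limit by dominated convergence. With the substitution $x=\sigma_x t$, \eqref{Eq_2D_Cerg_exact} becomes
\begin{equation*}
C_{\text{erg}}^{2\text{D}}=\int_{-\infty}^{+\infty}\log_2\!\left(1+\gamma\, 10^{-1.2\,t^2/\kappa^2}\right)\frac{1}{\sqrt{2\pi}}e^{-t^2/2}\,dt,
\end{equation*}
where $\kappa=\theta_{3\text{dB}} d/\sigma_x$ and $\gamma=P_{\max}A_e/(4\pi d^2N_0)$. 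Here I drop the $o(\cdot)$ term of \eqref{Eq_2D_Cinst_x}, since the paper treats the linearized model as its working model. The statement concerns $\kappa\to+\infty$, with $\gamma$ held fixed. That is, we compare against $C_{\text{inst}}^{2\text{D}}(0)=\log_2(1+\gamma)$ at the same $P_{\max}$; if one insists on coupling $P_{\max}$ to $\theta_{3\text{dB}}$ via \eqref{Eq_2D_Pmax}, the claim should instead be read as the difference between the two sides tending to zero.

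First, pointwise convergence. For every fixed $t$, $t^2/\kappa^2\to 0$, so the integrand tends to $\log_2(1+\gamma)\,\phi(t)$, where $\phi$ is the standard normal density.

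Second, domination. Because $10^{-1.2t^2/\kappa^2}\le 1$ and $\log_2(1+\cdot)$ is increasing, the integrand is bounded by $\log_2(1+\gamma)\,\phi(t)$. This bound is integrable and does not depend on $\kappa$. Dominated convergence then gives the limit $\log_2(1+\gamma)\int\phi=C_{\text{inst}}^{2\text{D}}(0)\int f_{\hat{x}_u}=C_{\text{inst}}^{2\text{D}}(0)$. This is exactly \eqref{Eq_2D_Asymptotic_Case2_wide_beam}, with $\int f_{\hat{x}_u}=1$ because it is a normalized Gaussian.

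As a quantitative alternative, I would note that $0\le C_{\text{inst}}^{2\text{D}}(0)-C_{\text{inst}}^{2\text{D}}(x)\le 1.2\,x^2/(d^2\theta_{3\text{dB}}^2)$. This follows from the mean value theorem applied to $s\mapsto\log_2(1+\gamma 10^{-s})$, whose derivative is bounded in magnitude by $1$ in the units of $\log_2$ versus $\log_{10}$; precisely, the bound is $\log_2 10\cdot 1.2\,x^2/(d^2\theta_{3\text{dB}}^2)$. Integrating against $f_{\hat{x}_u}$ yields an error of at most $1.2\log_2 10/\kappa^2\to 0$. This version has the advantage of being uniform in $\gamma$, so it also covers the case where $P_{\max}$ varies with $\theta_{3\text{dB}}$.

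The main obstacle is the modeling subtlety rather than the analysis. We must justify that the neglected $o(\cdot)$ remainder from the Taylor step \eqref{Eq_2D_theta_Taylor} does not matter, and fix what is held constant as $\kappa\to\infty$. I would handle both by working with the linearized capacity \eqref{Eq_2D_Cinst_x} as the definition of $C_{\text{inst}}^{2\text{D}}(x)$, as the paper does in \eqref{Eq_2D_Cerg_exact}. I would then use the $\gamma$-uniform bound above, so the result holds regardless of how $P_{\max}$ scales.
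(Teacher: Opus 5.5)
Your proof is correct, but it argues differently from the paper.

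\textbf{The paper's route.} The paper shows $C_{\text{erg}}^{2\text{D}}/C_{\text{inst}}^{2\text{D}}(0)\to1$ by an elementary squeeze. Its upper bound is the same inequality $C_{\text{inst}}^{2\text{D}}(x)\le C_{\text{inst}}^{2\text{D}}(0)$ that supplies your dominating function. Its lower bound drops $|x|>\delta$ at the intermediate scale $\delta=\sqrt{\sigma_x\theta_{3\text{dB}}d}$ and bounds what remains by $C_{\text{inst}}^{2\text{D}}(\delta)\int_{|x|\le\delta}f_{\hat{x}_u}$. Then $\delta^2/(\theta_{3\text{dB}}d)^2=\sigma_x/(\theta_{3\text{dB}}d)\to0$ gives $T_3\to1$, and $T_4=1-2Q(\sqrt{\theta_{3\text{dB}}d/\sigma_x})\to1$.

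\textbf{What your route buys.} Your rescaling $x=\sigma_x t$ plus dominated convergence replaces this hand-built splitting with a standard theorem. Your mean-value bound also gives an explicit error of order $(\sigma_x/(\theta_{3\text{dB}}d))^2$. The paper states no rate, and its choice of $\delta$ would only give a relative error of order $\sigma_x/(\theta_{3\text{dB}}d)$.

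\textbf{A sharpening.} The map $s\mapsto\log_2(1+\gamma\,10^{-s})$ is convex, with slope $-\log_2 10\cdot\gamma/(1+\gamma)$ at $s=0$. Hence, for the linearized capacity you work with, $C_{\text{erg}}^{2\text{D}}$ is bounded below by exactly the right-hand side of \eqref{Eq_2D_Cerg_with_Pmax}. In other words, the paper's later wide-beam approximation is a rigorous lower bound.

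\textbf{The regime $\gamma\to0$.} Keeping the factor $\gamma/(1+\gamma)$ also fixes the one regime where your additive bound is uninformative and the paper's multiplicative form is the meaningful one. An example is $\theta_{3\text{dB}}d/\sigma_x\to\infty$ via $d\to\infty$, where $\gamma\propto d^{-2}\to0$ and both sides vanish. Since $\ln(1+\gamma)\ge\gamma/(1+\gamma)$, the relative error is at most $1.2\ln 10\,\sigma_x^2/(\theta_{3\text{dB}}d)^2$ uniformly in $\gamma$. This subsumes the paper's ratio limit.

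\textbf{Notation.} Rename your ratio, since $\kappa$ already denotes the fourth-order coefficient in \eqref{Eq_2D_Cinst_infty}.
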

The proof of Theorem~\ref{Theorem_Asymptotic_Case2} is given in Appendix~\ref{Appendix_Theorem2}.
Theorem~\ref{Theorem_Asymptotic_Case2} indicates that, when the beamwidth is sufficiently wide, the beam nearly covers the possible range of the positioning error, and $C_{\text{inst}}^{2\text{D}} \left( x \right)$ is nearly flat around the origin, which can be approximated by $C_{\text{inst}}^{2\text{D}} \left( 0 \right)$.

\subsection{Approximations to $C_{\rm{inst}}^{2\rm{D}}$}
In Section~\ref{Two Asymptotic Cases}, two simplified expressions of~\eqref{Eq_2D_Cerg_exact} were derived for the two asymptotic cases, but closed-form expression of $C_{\text{erg}}^{2\text{D}}$ in~\eqref{Eq_2D_Cerg_exact} remains difficult to obtain. In this section, we propose an approximation of $C_{\text{inst}}^{2\text{D}} \left( x \right)$ in~\eqref{Eq_2D_Cerg_exact}, which also satisfies the two asymptotic cases in~\eqref{Eq_2D_Asymptotic_Case1_narrow_beam} and~\eqref{Eq_2D_Asymptotic_Case2_wide_beam}, and results in closed-form $C_{\text{erg}}^{2\text{D}}$ expression.

First, we perform the second-order Taylor expansion of $C_{\text{inst}}^{2\text{D}} \left( \hat{x}_u \right)$ in~\eqref{Eq_2D_Cinst_x} at $\hat{x}_u = 0$ as
\begin{equation}\label{Eq_2D_Cinst_2nd}
	\begin{aligned}
		C_{\text{inst}}^{2\text{D},2\text{nd}} \left( \hat{x}_u \right) &=  \log_2  \left(  1  + \frac{P_{\max} A_e}{ 4 \pi d^2 N_0}  \right) \\
		&- \frac{1.2 \log_2 10 \, \hat{x}_u^2}{d^2 \theta_{3 \text{dB}}^2} \cdot \frac{\frac{P_{\max} A_e}{ 4 \pi d^2 N_0}}{\frac{P_{\max} A_e}{ 4 \pi d^2 N_0} + 1} + o \left(  \frac{\hat{x}_u^2}{d^2} \right)
	\end{aligned}
\end{equation}
when $\hat{x}_u \to 0$. Noting that~\eqref{Eq_2D_Cinst_2nd} tends to $-\infty$ as $\hat{x}_u \to \infty$, while the exact capacity in~\eqref{Eq_2D_Cinst_x} remains non-negative, we clamp the negative part of~\eqref{Eq_2D_Cinst_2nd}  and obtain a non-negative approximation to~\eqref{Eq_2D_Cinst_x} as
\begin{equation}\label{Eq_2D_Cinst_2nd_clamp}
	\begin{aligned}
		&C_{\text{inst}}^{2\text{D},2\text{nd},+} \left( \hat{x}_u \right) =  \\
		&\begin{cases}
			\log_2  \left( 1 +  \frac{P_{\max} A_e}{ 4 \pi d^2 N_0} \! \right)  -  \frac{1.2  \log_2  10 \, \hat{x}_u^2}{d^2 \theta_{3 \text{dB}}^2} \! \cdot \! \frac{\frac{P_{\max} A_e}{ 4 \pi d^2 N_0}}{\frac{P_{\max} A_e}{ 4 \pi d^2 N_0} + 1}, \left| \hat{x}_u \right|  \leq   x_0 \\
			 0, \qquad \qquad \qquad \qquad \qquad \qquad \qquad \qquad \quad  \;\;\; \left| \hat{x}_u \right|  >  x_0
		\end{cases}
	\end{aligned}
\end{equation}
where $x_0 > 0$ is the positive root of~\eqref{Eq_2D_Cinst_2nd} satisfying
\begin{equation}\label{Eq_2D_x0_definition}
	\log_2 \left( 1 + \frac{P_{\max} A_e}{ 4 \pi d^2 N_0} \right)  - \frac{1.2 \log_2  10 \, x_0^2}{d^2 \theta_{3 \text{dB}}^2} \cdot \frac{\frac{P_{\max} A_e}{ 4 \pi d^2 N_0}}{\frac{P_{\max} A_e}{ 4 \pi d^2 N_0} + 1} = 0,
\end{equation}
and can be calculated as
\begin{equation}\label{Eq_2D_x0}
	x_0 = d \theta_{3 \text{dB}} \left( \frac{\log_2 \left( 1 + \frac{P_{\max} A_e}{ 4 \pi d^2 N_0} \right) }{1.2 \log_2 10} \cdot \frac{\frac{P_{\max} A_e}{ 4 \pi d^2 N_0} + 1}{\frac{P_{\max} A_e}{ 4 \pi d^2 N_0}} \right)^{\frac12} .
\end{equation}
However, eq.~\eqref{Eq_2D_Cinst_2nd_clamp} are uniquely determined by the original function $C_{\text{inst}}^{2\text{D}} \left( \hat{x}_u \right)$ in~\eqref{Eq_2D_Cinst_x} and the Taylor expansion point $\hat{x}_u = 0$. Hence, no additional degrees of freedom are available for~\eqref{Eq_2D_Cinst_2nd_clamp} to satisfy the two asymptotic cases in~\eqref{Eq_2D_Asymptotic_Case1_narrow_beam},~\eqref{Eq_2D_Asymptotic_Case2_wide_beam}. To solve this problem, based on the  non-negative second-order Taylor expansion in~\eqref{Eq_2D_Cinst_2nd_clamp}, we further introduce a fourth-order term and propose a polynomial expansion technique for the integrand factor $C_{\text{inst}}^{2\text{D}} \left( \hat{x}_u \right)$ in~\eqref{Eq_2D_Cinst_x} as
\begin{equation}\label{Eq_2D_Cinst_infty}
	\begin{aligned}
		&C_{\text{inst}}^{2\text{D},+} \left( \hat{x}_u \right) =  \\
		&\!\begin{cases}
			\!\log_2  \left( \! 1  \!+ \!  \frac{P_{\max} A_e}{ 4 \pi d^2 \! N_0} \! \right) \! - \! \frac{1.2 \log_2 \! 10 \, \hat{x}_u^2}{d^2 \theta_{3 \text{dB}}^2} \! \cdot \! \frac{\frac{P_{\max} A_e}{ 4 \pi d^2 N_0}}{\frac{P_{\max} A_e}{ 4 \pi d^2 N_0} +  1} + \kappa \hat{x}_u^4 , \left| \hat{x}_u \right| \! \leq \!  x_0 \\
			\! 0, \qquad \qquad \qquad \qquad \qquad \qquad \qquad \qquad \qquad \quad \;\;\, \left| \hat{x}_u \right| \! > \! x_0
		\end{cases}
	\end{aligned}
\end{equation}
where $\kappa$ is a coefficient to provide an additional degree of freedom to make $C_{\text{inst}}^{2\text{D},+} \left( \hat{x}_u \right)$ satisfy the two asymptotic cases as
\begin{equation}\label{Eq_2D_Asymptotic_Case1_narrow_beam_C+}
	\lim_{\frac{\theta_{3 \text{dB}} d}{\sigma_x} \to 0} C_{\text{erg}}^{2\text{D}} =  f_{\hat{x}_u} (0) \cdot  \int_{-\infty}^{+\infty} C_{\text{inst},+}^{2\text{D}} \left( x \right) dx
\end{equation}
and
\begin{equation}\label{Eq_2D_Asymptotic_Case2_wide_beam_C+}
	\lim_{\frac{\theta_{3 \text{dB}} d}{\sigma_x} \to +\infty} C_{\text{erg}}^{2\text{D}}  = C_{\text{inst}}^{2\text{D},+} \left( 0 \right).
\end{equation}

\subsection{Calculation of $\kappa$ and Asymptotic Approximation to $C_{\rm{erg}}^{2\rm{D}}$}
In this section, we use the two asymptotic cases in Section~\ref{Two Asymptotic Cases} to calculate $\kappa$ and further obtain closed-form asymptotic approximation to $C_{\rm{erg}}^{2\rm{D}}$.

First, we need to prove that $C_{\text{inst}}^{2\text{D},+} \left( \hat{x}_u \right)$ in~\eqref{Eq_2D_Cinst_infty} can also satisfy the two asymptotic cases in Section~\ref{Two Asymptotic Cases}. Since $C_{\text{inst}}^{2\text{D},+} \left( 0 \right) = C_{\text{inst}}^{2\text{D}} \left( 0 \right)$ always holds, substitute $C_{\text{inst}}^{2\text{D},+} \left( 0 \right)$ into~\eqref{Eq_2D_Asymptotic_Case2_wide_beam} to replace $C_{\text{inst}}^{2\text{D}} \left( 0 \right)$, and $C_{\text{inst}}^{2\text{D},+} \left( \hat{x}_u \right)$ in~\eqref{Eq_2D_Cinst_infty} naturally satisfies the asymptotic condition of Case 2 in~\eqref{Eq_2D_Asymptotic_Case2_wide_beam}.

To ensure that $C_{\text{inst}}^{2\text{D},+} \left( \hat{x}_u \right)$ in~\eqref{Eq_2D_Cinst_infty} also satisfies Case 1 in~\eqref{Eq_2D_Asymptotic_Case1_narrow_beam}, we impose the following constraint as
\begin{equation}\label{Eq_2D_Cinst=Cinst_infty}
	\int_{-\infty}^{+\infty} C_{\text{inst}}^{2\text{D}} \left( x \right) dx = \int_{-x_0}^{x_0} C_{\text{inst}}^{2\text{D},+} \left( x \right) dx.
\end{equation}
Substituting~\eqref{Eq_2D_Cinst_x} and~\eqref{Eq_2D_Cinst_infty} into~\eqref{Eq_2D_Cinst=Cinst_infty} to replace $C_{\text{inst}}^{2\text{D}} \left( x \right)$ and $C_{\text{inst}}^{2\text{D},+}  \left( x \right)$, and rewriting~\eqref{Eq_2D_Cinst=Cinst_infty} as
\begin{equation}\label{Eq_2D_x0_to_solve}
	\begin{aligned}
		&\int_{-\infty}^{+\infty} \log_2 \left( 1 + \frac{P_{\max} A_e}{ 4 \pi d^2 N_0}\cdot 10^{-1.2 \frac{x^2}{d^2\theta_{3 \text{dB}}^2}} \right) dx \\
		=&\int_{-x_0}^{x_0} \left[  \log_2  \left( 1 +  \frac{P_{\max} A_e}{ 4 \pi d^2 N_0} \right) \right. \\
		&\qquad \,\, \left. - \frac{1.2 \log_2 10 \, x^2}{d^2 \theta_{3 \text{dB}}^2} \cdot \frac{\frac{P_{\max} A_e}{ 4 \pi d^2 N_0}}{\frac{P_{\max} A_e}{ 4 \pi d^2 N_0}  + 1} + \kappa  x^4 \right] dx.
	\end{aligned}
\end{equation}
The two sides of~\eqref{Eq_2D_x0_to_solve} can be evaluated separately. The right-hand side is first expressed as
\begin{equation}\label{Eq_2D_x0_right_hand_side}
	\begin{aligned}
		&\int_{-x_0}^{x_0} \left[  \log_2  \left( 1 +  \frac{P_{\max} A_e}{ 4 \pi d^2 N_0} \right) \right. \\
		&\qquad \,\, \left. - \frac{1.2 \log_2 10 \, x^2}{d^2 \theta_{3 \text{dB}}^2} \cdot \frac{\frac{P_{\max} A_e}{ 4 \pi d^2 N_0}}{\frac{P_{\max} A_e}{ 4 \pi d^2 N_0}  + 1} + \kappa  x^4 \right] dx \\
		&= 2 \log_2 \left( 1 + \frac{P_{\max} A_e}{ 4 \pi d^2 N_0} \right) x_0 \\
		&-\frac{2}{3} \cdot \frac{1.2 \log_2 10}{d^2 \theta_{3 \text{dB}}^2} \cdot \frac{\frac{P_{\max} A_e}{ 4 \pi d^2 N_0}}{\frac{P_{\max} A_e}{ 4 \pi d^2 N_0}  + 1} x_0^3 +\frac{2}{5} \kappa x_0^5.
	\end{aligned}
\end{equation}
The left-hand side of~\eqref{Eq_2D_x0_to_solve} can be expressed as
\begin{equation}\label{Eq_2D_x0_left_hand_side}
	\begin{aligned}
		&\int_{-\infty}^{+\infty} \log_2 \left( 1 + \frac{P_{\max} A_e}{ 4 \pi d^2 N_0}\cdot 10^{-1.2 \frac{x^2}{d^2\theta_{3 \text{dB}}^2}} \right) dx \\
		&= \frac{1}{\ln 2} \cdot \frac{d\theta_{3 \text{dB}}}{\sqrt{1.2 \ln 10}} \int_{-\infty}^{+\infty} \ln \left( 1 + \frac{P_{\max} A_e}{ 4 \pi d^2 N_0} \cdot  e^{-x^2} \right) dx.
	\end{aligned}
\end{equation}
For a given constant $K > 0$, we have
\begin{equation}\label{Eq_2D_Li}
	\begin{aligned}
		&\!\int_{-\infty}^{+\infty} \ln(1+ K e^{-t^2}) dt = \! \int_{-\infty}^{+\infty} \! \sum_{n=1}^{\infty} \frac{(-1)^{n-1} K^n }{n}  e^{-n t^2} \! dt \\
		= &\sum_{n=1}^{\infty} \frac{(-1)^{n-1} K^n}{n} \int_{-\infty}^{+\infty} e^{-n t^2} dt = \sum_{n=1}^{\infty} \frac{(-1)^{n-1} K^n}{n} \sqrt{\frac{\pi}{n}} \\
		= &-\sqrt{\pi} \sum_{n=1}^{\infty} \frac{ \left( -K \right)^n}{n^{\frac{3}{2}}} = -\sqrt{\pi} \operatorname{Li}_{3/2} \left( -K \right)
	\end{aligned}
\end{equation}
where $\operatorname{Li}_s \left( z \right) = \sum_{n=1}^{\infty} z^n / n^s$ is the polylogarithm. Simplifying~\eqref{Eq_2D_x0_left_hand_side} with~\eqref{Eq_2D_Li}, the left-hand side of~\eqref{Eq_2D_x0_to_solve} is
\begin{equation}\label{Eq_left_hand_side_final}
	\int_{-\infty}^{+\infty} \! C_{\text{inst}}^{2\text{D}} \! \left( x \right)  dx \! = \! -\frac{d\theta_{3 \text{dB}}}{\ln 2} \! \sqrt{\frac{\pi}{1.2 \ln 10}} \! \operatorname{Li}_{3/2} \! \left( \! - \frac{P_{\max} A_e}{ 4 \pi d^2 N_0} \right) \! .
\end{equation}
Setting~\eqref{Eq_2D_x0_right_hand_side} equal to~\eqref{Eq_left_hand_side_final}, closed-form expression of $\kappa$ can be computed as
\begin{equation}\label{Eq_2D_kappa}
	\begin{aligned}
		\kappa =  &- \frac{5}{2 x_0^5} \frac{d\theta_{3 \text{dB}}}{\ln 2} \sqrt{\frac{\pi}{1.2 \ln 10}} \operatorname{Li}_{3/2} \left( - \frac{P_{\max} A_e}{ 4 \pi d^2 N_0} \right) \\
		& -\frac{10}{3 x_0^4} \log_2 \left( 1 +  \frac{P_{\max} A_e}{ 4 \pi d^2 N_0} \right).
	\end{aligned}
\end{equation}
where $x_0$ is given by~\eqref{Eq_2D_x0}. After obtaining $\kappa$, the closed-form ergodic capacity $C_{\text{erg}}^{2\text{D}}$ in~\eqref{Eq_2D_Cerg_exact} can be approximated as
\begin{equation}\label{Eq_2D_Cerg_infty}
	\begin{aligned}
		C_{\text{erg}}^{2\text{D},+} &= \int_{-x_0}^{x_0} C_{\text{inst}}^{2\text{D},+} \left( x \right) \cdot f_{\hat{x}_u} \left( x \right)  dx \\
		&= \log_2 \left( 1 + \frac{P_{\max} A_e}{ 4 \pi d^2 N_0} \right) \cdot \left[ 1 - 2 Q\left( \frac{x_0}{\sigma_x} \right) \right]  \\
		&- \frac{1.2 \log_2 10}{d^2 \theta_{3 \text{dB}}^2} \cdot \frac{\frac{P_{\max} A_e}{ 4 \pi d^2 N_0}}{\frac{P_{\max} A_e}{ 4 \pi d^2 N_0}  + 1} \cdot  M_2^{2\text{D}} + \kappa  \cdot M_4^{2\text{D}} 
	\end{aligned}
\end{equation}
where $M_2^{2\text{D}}, M_4^{2\text{D}}$ represent the truncated second and fourth moment of $\hat{x}_u$, respectively, which can be computed as
\begin{equation}\label{Eq_2D_M1}
	M_2^{2\text{D}} = \sigma_x^2 \left[ 1 - 2 Q \left( \frac{x_0}{\sigma_x} \right) - \sqrt{\frac{2}{\pi}} \frac{x_0}{\sigma_x} e^{ -\frac{x_0^2}{2 \sigma_x^2}} \right]
\end{equation}
and
\begin{equation}\label{Eq_2D_M2}
	M_4^{2\text{D}} = \sigma_x^4 \left[ 3\left( 1 \! - \! 2 Q \left( \frac{x_0}{\sigma_x} \right) \right) - \sqrt{\frac{2}{\pi}} \frac{x_0}{\sigma_x} \left( \frac{x_0^2}{\sigma_x^2} + 3 \right) e^{ -\frac{x_0^2}{2 \sigma_x^2}}\right]
\end{equation}
where  $Q \left( x \right) = \int_x^{+\infty} 1/ \left( \sqrt{2 \pi} \right) \exp{\left( -t^2/2 \right)} dt$ is the Gaussian $Q$-function, and $x_0$ has been derived in~\eqref{Eq_2D_x0}. As shown in~\eqref{Eq_2D_Cerg_infty}, the ergodic capacity depends on the link distance, beamwidth, and positioning error, which should be jointly considered in beam design to maximize the overall system capacity. It is worth noting that~\eqref{Eq_2D_Cerg_infty} is a two-sided asymptotic approximation, since it remains accurate under both the narrow beam and wide beam conditions in Section~\ref{Two Asymptotic Cases}. To better illustrate the approximation accuracy, a comparison of $C_{\text{inst}}^{2\text{D}} \left( \hat{x}_u \right)$ in~\eqref{Eq_2D_Cinst_x}, $C_{\text{inst}}^{2\text{D},2\text{nd},+} \left( \hat{x}_u \right)$ in~\eqref{Eq_2D_Cinst_2nd_clamp}, and $C_{\text{inst}}^{2\text{D},+} \left( \hat{x}_u \right)$ in~\eqref{Eq_2D_Cinst_infty} is shown in Fig.~\ref{Figure_2D_Different_Cinst}.

\begin{figure}[!t]
	\centering
	\subfloat[Asymptotic case with narrow beamwidth: impulse-like instantaneous capacity and ``flat'' $f_{\hat{x}_u} \left( x \right)$ near the origin.]{
		\includegraphics[width=0.45\textwidth]{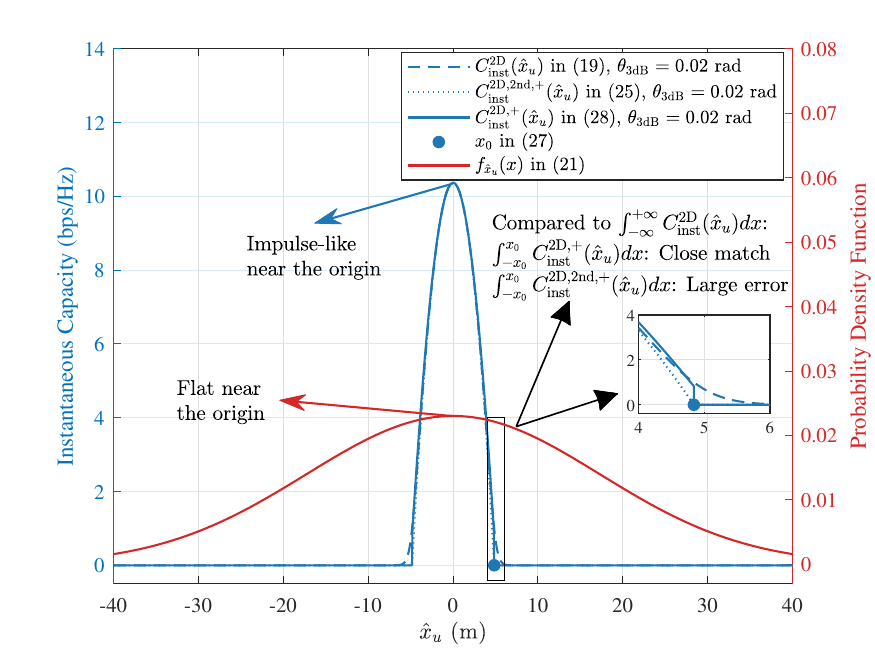}
		\label{Figure_2D_Small_theta}
	}
	\hfill
	\subfloat[Asymptotic case with wide beamwidth: ``flat'' instantaneous capacity and   impulse-like $f_{\hat{x}_u} \left( x \right)$ near the origin.]{
		\includegraphics[width=0.45\textwidth]{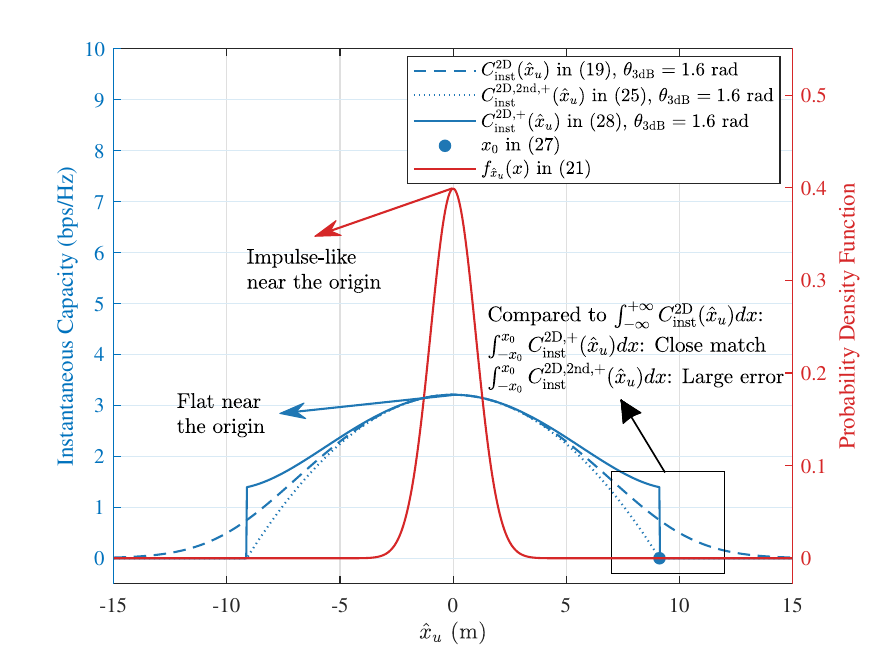}
		\label{Figure_2D_Large_theta}
	}
	\caption{Comparison of two asymptotic cases of $C_{\text{inst}}^{2\text{D}} \left( \hat{x}_u \right)$ in~\eqref{Eq_2D_Cinst_x}, $C_{\text{inst}}^{2\text{D},2\text{nd},+} \left( \hat{x}_u \right)$ in~\eqref{Eq_2D_Cinst_2nd_clamp}, and $C_{\text{inst}}^{2\text{D},+} \left( \hat{x}_u \right)$ in~\eqref{Eq_2D_Cinst_infty} as a function of $\hat{x}_u$.}
	\label{Figure_2D_Different_Cinst}
\end{figure}

Fig.~\eqref{Figure_2D_Small_theta} illustrates the asymptotic case in~\eqref{Eq_2D_Asymptotic_Case1_narrow_beam}. It can be observed that the instantaneous capacity becomes increasingly concentrated around the origin, exhibiting a impulse-like behavior, while $f_{\hat{x}_u} \left( x \right)$ remains nearly constant in the vicinity of the origin. Therefore, when evaluating $C_{\text{erg}}^{2\text{D}}$ in~\eqref{Eq_2D_Cerg_exact}, $f_{\hat{x}_u} \left( x \right)$ can be approximated by $f_{\hat{x}_u} \left( 0 \right)$, as shown in~\eqref{Eq_2D_Asymptotic_Case1_narrow_beam}. Meanwhile, $C_{\text{inst}}^{2\text{D},2\text{nd},+} \left( \hat{x}_u \right)$ in~\eqref{Eq_2D_Cinst_2nd_clamp} exhibits a considerable deviation from the exact expression $C_{\text{inst}}^{2\text{D}} \left( \hat{x}_u \right)$ in~\eqref{Eq_2D_Cinst_x}, while $C_{\text{inst}}^{2\text{D},+} \left( \hat{x}_u \right)$ in~\eqref{Eq_2D_Cinst_infty} significantly improves the approximation accuracy. It is worth noting that, since $x_0$ is the root of the quadratic approximation in~\eqref{Eq_2D_Cinst_2nd} rather than that of the fourth-order approximation in~\eqref{Eq_2D_Cinst_infty}, $C_{\mathrm{inst}}^{2\mathrm{D},+}(\hat{x}_u)$ in~\eqref{Eq_2D_Cinst_infty} has a discontinuity at $x_0$, which is more evident in Fig.~\ref{Figure_2D_Large_theta}. However, this discontinuity does not affect its satisfaction of the two asymptotic cases.

The other asymptotic case of wide beamwidth is shown in Fig.~\eqref{Figure_2D_Large_theta}. The instantaneous capacity remains nearly constant and $f_{\hat{x}_u} \left( x \right)$ behaves like an impulse around the origin. Hence, when evaluating $C_{\text{erg}}^{2\text{D}}$ in~\eqref{Eq_2D_Cerg_exact}, $C_{\text{inst}}^{2\text{D}} \left( \hat{x}_u \right)$ can be approximated by $C_{\text{inst}}^{2\text{D}} \left( 0 \right)$. It is worth noting that, when the beamwidth is sufficiently large, although $C_{\text{inst}}^{2\text{D},2\text{nd},+} \left( \hat{x}_u \right)$ in~\eqref{Eq_2D_Cinst_2nd_clamp} exhibits noticeable deviation from $C_{\text{inst}}^{2\text{D}} \left( \hat{x}_u \right)$ in~\eqref{Eq_2D_Cinst_x} away from the origin, this deviation has a negligible impact because the positioning error is highly concentrated around the origin and the beam can cover most of the positioning error region.

\section{3D Asymptotic Capacity Expression}
\subsection{3D Ergodic Capacity in an Integral Form}
Similar to 2D cases, the instantaneous channel capacity of 3D positioning-assisted beamforming systems can be calculated according to~\eqref{Eq_Pr},~\eqref{Eq_Gd} and~\eqref{Eq_3D_Gtheta_phi} as
\begin{equation}\label{Eq_3D_Cinst}
	\begin{aligned}
		C_{\text{inst}}^{3\text{D}} &= \log_2 \! \left( 1 + \frac{P_r}{N_0} \right) \\
		&= \log_2 \left( 1 + \frac{P_{\max} A_e}{ 4 \pi d^2 N_0}\cdot  10^{-1.2 \left[ \theta, \phi \right] \mathbf{A} \left[ \theta, \phi \right]^T} \right) .
	\end{aligned}
\end{equation}
As error is small compared to the link distance, we have $\hat{x}_u / d \rightarrow 0$, $\hat{y}_u / d \rightarrow 1$ and $\hat{z}_u / d \rightarrow 0$. According to~\eqref{Eq_3D_theta} and~\eqref{Eq_3D_phi}, $\theta$ and $\phi$ can be approximated by Taylor expansion as
\begin{equation}\label{Eq_3D_theta_Taylor}
	\theta = \arctan \frac{\hat{x}_u}{\hat{y}_u} = \frac{\hat{x}_u}{d} + o \left( || \left( \hat{x}_u, \left( \hat{y}_u - d \right), \hat{z}_u \right) || \right)
\end{equation}
and
\begin{equation}\label{Eq_3D_phi_Taylor}
	\phi = \arctan \frac{\hat{z}_u}{\sqrt{\hat{x}_u^2 + \hat{y}_u^2}} = \frac{\hat{z}_u}{d} + o \left( || \left( \hat{x}_u, \left( \hat{y}_u - d \right), \hat{z}_u \right) || \right)
\end{equation}
when $\left( \hat{x}_u, \hat{y}_u, \hat{z}_u  \right) \to \left( 0, d, 0 \right)$. Substituting~\eqref{Eq_3D_theta_Taylor} and~\eqref{Eq_3D_phi_Taylor} into~\eqref{Eq_3D_Cinst}, we can approximate $C_{\text{inst}}^{3\text{D}}$ as
\begin{equation}\label{Eq_3D_Cinst_xz}
	C_{\text{inst}}^{3\text{D}} \! \left( \hat{x}_u,\hat{z}_u \right)  \approx \log_2 \left( 1 \! + \! \frac{P_{\max} A_e}{ 4 \pi d^2 N_0}  \cdot 10^{-\frac{1.2}{d^2} \left[ \hat{x}_u,\hat{z}_u \right] \mathbf{A} \left[ \hat{x}_u,\hat{z}_u \right]^T} \right) \! .
\end{equation}
Combining~\eqref{Eq_3D_PDF} and~\eqref{Eq_3D_Cinst_xz}, the ergodic capacity can be computed by marginalizing $\hat{y}_u$ as
\begin{equation}\label{Eq_3D_Cerg_exact}
	\begin{aligned}
		C_{\text{erg}}^{3\text{D}} &= \iiint_{-\infty}^{+\infty} C_{\text{inst}}^{3\text{D}} \left( x, z \right) \cdot f_{\hat{\mathbf{p}}_u^{3\text{D}}} \left( \hat{\mathbf{p}}_u^{3\text{D}} \right) dxdydz \\
		&=\iint_{-\infty}^{+\infty}C_{\text{inst}}^{3\text{D}} \left( x, z \right) \cdot f_{\hat{x}_u, \hat{z}_u} \left( x, z \right) dx dz
	\end{aligned}
\end{equation}
where
\begin{equation}\label{Eq_3D_fxz}
	f_{\hat{x}_u, \hat{z}_u} \left( x, z \right) = \frac{1}{2 \pi \!\left| \mathbf{\Sigma}^{'2\text{D}} \right|^{\frac{1}{2}}} \exp \Big( -\frac{ \left[ x, z \right] \left( \mathbf{\Sigma}^{'2\text{D}} \right)^{-1} \left[ x, z \right] ^T }{2} \Big)
\end{equation}
denotes the joint PDF of $\left[ \hat{x}_u, \hat{z}_u \right]$ and 
\begin{equation}\label{Eq_3D_Cov_Matrix_xz}
	\mathbf{\Sigma}^{'2\text{D}} =  
	\begin{bmatrix}
		\mathbf{\Sigma}^{3\text{D}} \left( 1,1 \right) &\mathbf{\Sigma}^{3\text{D}} \left( 1,3 \right) \\ \mathbf{\Sigma}^{3\text{D}} \left( 3,1 \right)  & \mathbf{\Sigma}^{3\text{D}} \left( 3,3 \right)
	\end{bmatrix} 
\end{equation}
represents their covariance matrix where $\mathbf{\Sigma}^{3\text{D}} \left( m,n \right)$ is the $\left( m, n \right)$th entry of $\mathbf{\Sigma}^{3\text{D}}$.

\subsection{Approximations to $C_{\rm{inst}}^{3\rm{D}}$}
Similar to the 2D asymptotic cases in Section~\ref{Two Asymptotic Cases}, eq.~\eqref{Eq_3D_Cinst_xz} can also be simplified into two asymptotic cases. Analogously, we propose the polynomial expansion technique to the exact 3D instantaneous capacity $C_{\text{inst}}^{3\text{D}} \left( \hat{x}_u,\hat{z}_u \right)$ in~\eqref{Eq_3D_Cinst_xz} as
\begin{equation}\label{Eq_3D_Cinst_infty}
	\begin{aligned}
		&C_{\text{inst}}^{3\text{D},+} \! \left( \hat{x}_u, \hat{z}_u \right) \! = \\
		&\!\begin{cases}
			&\!\!\!\!\!\!\log_2 \! \left( \! 1 \! + \! \frac{P_{\max} A_e}{ 4 \pi d^2 \! N_0} \! \right) \! - \! \frac{1.2 \log_2 \! 10}{d^2} \! \cdot \! \frac{\frac{P_{\max} A_e}{ 4 \pi d^2 N_0}}{\frac{P_{\max} A_e}{ 4 \pi d^2 N_0} \! + \! 1} \! \left[ \hat{x}_u,\! \hat{z}_u \right] \! \mathbf{A} \! \left[ \hat{x}_u, \!\hat{z}_u \right]^T \\
			&+\, \eta \left( \left[ \hat{x}_u, \hat{z}_u \right] \mathbf{A} \left[ \hat{x}_u, \hat{z}_u \right]^T \right)^2,  \left[ \hat{x}_u, \hat{z}_u \right] \mathbf{A} \left[ \hat{x}_u, \hat{z}_u \right]^T \leq R_0^2 \\
			&\!\!\!\!\!\! 0, \qquad \qquad \qquad \qquad \qquad \quad \;\,\,\,\, \left[ \hat{x}_u, \hat{z}_u \right] \mathbf{A} \left[ \hat{x}_u, \hat{z}_u \right]^T > R_0^2
		\end{cases}
	\end{aligned}
\end{equation}
where the first two terms in the non-zero branch are obtained from the Taylor expansion of $C_{\text{inst}}^{3\text{D}} \! \left( \hat{x}_u,\hat{z}_u \right)$ in~\eqref{Eq_3D_Cinst_xz}; $\eta$ serves the same role as the $\kappa$ in 2D scenarios and is chosen to ensure
\begin{equation}\label{Eq_3D_Cinst=Cinst_infty}
	\iint_{-\infty}^{+\infty} C_{\text{inst}}^{3\text{D}} \left( x, z \right) dx dz =  \iint_{\mathcal{D}_0}  C_{\text{inst}}^{3\text{D},+} \left( x, z \right) dx dz 
\end{equation}
where $\mathcal{D}_0$ satisfies
\begin{equation}\label{Eq_3D_D0}
	\mathcal{D}_0 = \left\{ \left[ x, z \right] \middle| \left[ x, z \right] \mathbf{A} \left[ x, z \right]^T \leq R_0^2 \right\}
\end{equation}
and $R_0^2$ is chosen as the root of
\begin{equation}\label{Eq_3D_eta_to_solve}
	\log_2 \! \left(  1 \! + \! \frac{P_{\max} A_e}{ 4 \pi d^2 \! N_0} \right) - \frac{1.2 \log_2  10}{d^2} \! \cdot \! \frac{\frac{P_{\max} A_e}{ 4 \pi d^2 N_0}}{\frac{P_{\max} A_e}{ 4 \pi d^2 N_0}  +  1} \cdot  R_0^2 =  0,
\end{equation}
which can be calculated as
\begin{equation}\label{Eq_3D_R0^2}
	R_0^2 = \log_2  \left(  1  + \frac{P_{\max} A_e}{ 4 \pi d^2  N_0} \right) \cdot   \frac{d^2}{1.2 \log_2  10}  \cdot  \frac{\frac{P_{\max} A_e}{ 4 \pi d^2 N_0} +  1}{\frac{P_{\max} A_e}{ 4 \pi d^2 N_0}}.
\end{equation}

\subsection{Calculation of $\eta$ and Asymptotic Approximation to $C_{\rm{erg}}^{3\rm{D}}$}
Substituting $C_{\text{inst}}^{3\text{D}} \left( \hat{x}_u,\hat{z}_u \right)$ in~\eqref{Eq_3D_Cinst_xz} and $C_{\text{inst}}^{3\text{D},+} \left( \hat{x}_u,\hat{z}_u \right)$ in~\eqref{Eq_3D_Cinst_infty} into~\eqref{Eq_3D_Cinst=Cinst_infty}, we can rewrite~\eqref{Eq_3D_Cinst=Cinst_infty} as
\begin{equation}\label{Eq_3D_R0_to_solve}
	\begin{aligned}
		&\underbrace{\iint_{-\infty}^{+\infty} \log_2 \left( \! 1 \! + \! \frac{P_{\max} A_e}{ 4 \pi d^2 N_0} \! \cdot \! 10^{-\frac{1.2}{d^2} \left[ x, z \right] \mathbf{A} \left[ x, z \right]^T} \right) dx dz}_{I_L} \\
		&=  \underbrace{
			\begin{aligned}[t]
				&\iint_{\mathcal{D}_0} \left[ \log_2 \left( 1  +  \frac{P_{\max} A_e}{ 4 \pi d^2 N_0} \right) \right. \\
		& \qquad \quad \!\! - \frac{1.2 \log_2 10}{d^2} \cdot \frac{\frac{P_{\max} A_e}{ 4 \pi d^2 N_0}}{\frac{P_{\max} A_e}{ 4 \pi d^2 N_0} + 1} \cdot \left[ x, z \right] \mathbf{A} \left[ x, z \right]^T \\
		&\qquad \quad  \!\! + \left. \eta \left( \left[ x, z \right] \mathbf{A} \left[ x, z \right]^T \right)^2 \right] dx dz
		\end{aligned}
	}_{I_R}
	\end{aligned}
\end{equation}
where $I_R$ and $I_L$ can be evaluated separately. To calculate $I_R$, we apply the eigenvalue decomposition $\mathbf{A} = \mathbf{U} \mathbf{Y} \mathbf{U}^T$ and change variables to $\mathbf{t} = \sqrt{\mathbf{Y}} \mathbf{U}^T \left[ x, z \right]^T$. After simplification, $I_R$ can be expressed as
\begin{equation}\label{Eq_3D_IR_final}
	\begin{aligned}
		I_R &= \frac{2 \pi}{\left| \mathbf{A} \right|^{\frac{1}{2}}} \cdot \left[ \frac{R_0^2}{2} \cdot \log_2 \left( 1  +  \frac{P_{\max} A_e}{ 4 \pi d^2 N_0} \right) \right. \\
		& \qquad  \qquad - \left. \frac{R_0^4}{4} \cdot \frac{1.2 \log_2 \! 10}{d^2} \cdot \frac{\frac{P_{\max} A_e}{ 4 \pi d^2 N_0}}{\frac{P_{\max} A_e}{ 4 \pi d^2 N_0} + 1} + \frac{R_0^6}{6} \cdot \eta \right].
	\end{aligned}
\end{equation}
$I_L$ can be equivalently transformed as
\begin{equation}\label{Eq_3D_IL_exp}
	\begin{aligned}
		I_L  =  \iint_{-\infty}^{+\infty} & \log_2 \left( 1 + \frac{P_{\max} A_e}{ 4 \pi d^2  N_0} \right. \\
		& \times \left. \exp  \left( -\frac{1.2 \ln 10}{d^2} \left[ x, z \right] \mathbf{A} \left[ x, z \right]^T \right) \right)  dx dz.
	\end{aligned}
\end{equation}
To calculate~\eqref{Eq_3D_IL_exp}, we change variables in~\eqref{Eq_3D_IL_exp} to $\mathbf{v} = \sqrt{\frac{1.2 \ln 10}{d^2}} \left[ \hat{x}_u, \hat{z}_u \right]^T$ and after simplification, eq.~\eqref{Eq_3D_IL_exp} can be expressed as
\begin{equation}\label{Eq_3D_IL_final}
	I_L = -\frac{\pi d^2}{ \left| \mathbf{A} \right|^{\frac{1}{2}} \cdot  \ln 2 \cdot 1.2 \ln 10}  \cdot \operatorname{Li}_2 \left( -\frac{P_{\max} A_e}{ 4 \pi d^2 N_0 } \right).
\end{equation}
Setting $I_R$ in~\eqref{Eq_3D_IR_final} equal to $I_L$ in~\eqref{Eq_3D_IL_final} and combining with~\eqref{Eq_3D_R0^2}, closed-form expression of $\eta$ can be obtained as
\begin{equation}\label{Eq_3D_eta}
	\begin{aligned}
		\eta = &-\frac{3 d^2}{R_0^6 \cdot \ln 2 \cdot 1.2 \ln 10} \cdot \operatorname{Li}_2 \left( -\frac{P_{\max} A_e}{ 4 \pi d^2 N_0 } \right) \\
		&- \frac{3}{2 R_0^4} \log_2 \left( 1  +  \frac{P_{\max} A_e}{ 4 \pi d^2 N_0} \right).
	\end{aligned}
\end{equation}
Substituting $C_{\text{inst}}^{3\text{D},+} \left( \hat{x}_u, \hat{z}_u \right)$ in~\eqref{Eq_3D_Cinst_infty} into~\eqref{Eq_3D_Cerg_exact} in place of $C_{\text{inst}}^{3\text{D}} \left( x, z \right)$ and combining with~\eqref{Eq_3D_eta}, we can approximate the closed-form expression of $C_{\text{erg}}^{3\text{D}}$ as
\begin{equation}\label{Eq_3D_Cerg_infty}
	\begin{aligned}
		C_{\text{erg}}^{3\text{D},+} &= \iint_{\mathcal{D}_0} C_{\text{inst}}^{3\text{D},+} \left( x, z \right) \cdot f_{\hat{x}_u, \hat{z}_u} \left( x, z \right) dx dz \\
		&=  \log_2 \left( 1  +  \frac{P_{\max} A_e}{ 4 \pi d^2 N_0} \right) \cdot M_0^{3\text{D}} \\
		&- \frac{1.2 \log_2 10}{d^2} \cdot \frac{\frac{P_{\max} A_e}{ 4 \pi d^2 N_0}}{\frac{P_{\max} A_e}{ 4 \pi d^2 N_0} + 1} \cdot M_2^{3\text{D}} + \eta \cdot M_4^{3\text{D}}
	\end{aligned}
\end{equation}
where $M_0^{3\text{D}}$, $M_2^{3\text{D}}$ and $M_4^{3\text{D}}$ represent the truncated zero, second and fourth moment of Gaussian distribution, whose closed-form expressions are derived in Appendix~\ref{Appendix_M0^3D},~\ref{Appendix_M2^3D} and~\ref{Appendix_M4^3D}, respectively. As indicated by~\eqref{Eq_3D_Cerg_infty}, the 3D ergodic channel capacity is governed by the link distance, beamwidth, and positioning error. This behavior is consistent with that observed in the 2D case, suggesting that these parameters should be jointly optimized to maximize the capacity.

\section{Beam Pattern Optimization}\label{Beam Pattern Optimization}
\subsection{2D Optimal Beam Pattern}
Since~\eqref{Eq_2D_Cerg_infty} involves $Q(\cdot)$, $\operatorname{Li}(\cdot)$ functions, direct optimization is difficult. We therefore seek an asymptotic expression for~\eqref{Eq_2D_Cerg_infty} to facilitate the optimization. To simplify $Q \left( x_0 / \sigma_x \right)$ in~\eqref{Eq_2D_Cerg_infty}, we consider the asymptotic Case 2 in~\ref{Two Asymptotic Cases}, where $\theta_{3 \text{dB}} d / \sigma_x \to +\infty$. According to~\eqref{Eq_2D_x0}, we can calculate that
\begin{equation}\label{Eq_2D_x0/sigmax}
	\frac{x_0}{\sigma_x} \! = \! \frac{d \theta_{3 \text{dB}}}{\sigma_x} \! \sqrt{\frac{\log_2 \! \left( 1 + \frac{P_{\max} A_e}{ 4 \pi d^2 \! N_0} \right) }{1.2 \log_2 10} \! \cdot \! \frac{\frac{P_{\max} A_e}{ 4 \pi d^2 N_0} + 1}{\frac{P_{\max} A_e}{ 4 \pi d^2 N_0}} } \to \! +\infty,
\end{equation}
which results in $Q \left( x_0 / \sigma_x \right) \to 0$, $M_2^{2\text{D}} \to \sigma_x^2$, $M_4^{2\text{D}} \to 3\sigma_x^4$, and the integral region in~\eqref{Eq_2D_Cerg_infty} can be extended to $\left( -\infty, +\infty \right)$. Moreover, $\kappa$ in~\eqref{Eq_2D_kappa} tends to zero as $x_0/\sigma_x$ increases. In this way, the ergodic capacity in~\eqref{Eq_2D_Cerg_infty} can be asymptotically approximated by
\begin{equation}\label{Eq_2D_Cerg_with_Pmax}
	\begin{aligned}
		C_{\text{erg}}^{2\text{D},+} & \approx \int_{-\infty}^{+\infty} \left[  \log_2  \left( 1 +  \frac{P_{\max} A_e}{ 4 \pi d^2 N_0} \right) \right. \\
		&\qquad \qquad \left. - \frac{1.2 \log_2 10 \, x^2}{d^2 \theta_{3 \text{dB}}^2} \cdot \frac{\frac{P_{\max} A_e}{ 4 \pi d^2 N_0}}{\frac{P_{\max} A_e}{ 4 \pi d^2 N_0}  + 1} \right] \cdot f_{\hat{x}_u} \left( x \right)  dx \\
		&= \log_2  \left( \! 1 \! + \! \frac{P_{\max} A_e}{ 4 \pi d^2 \! N_0}\! \right) \! - \! \frac{1.2 \log_2 \! 10 \, \sigma_x^2}{d^2 \theta_{3 \text{dB}}^2} \! \cdot \! \frac{\frac{P_{\max} A_e}{ 4 \pi d^2 \! N_0}}{\frac{P_{\max} A_e}{ 4 \pi d^2 \! N_0} \! + \! 1} .
	\end{aligned}
\end{equation}
In the high-SNR regime, $\frac{P_{\max} A_e}{ 4 \pi d^2 \! N_0} \gg 1$. Substituting~\eqref{Eq_2D_Pmax} into~\eqref{Eq_2D_Cerg_with_Pmax} to eliminate $P_{\max}$, the channel capacity expression reduces to
\begin{equation}\label{Eq_2D_Cerg_to_be_opt}
	C_{\text{erg}}^{2\text{D},+} \approx  \log_2 \left( \frac{A_e P_t}{ \theta_{3\text{dB}} 4 \pi d^2 N_0} \sqrt{\frac{1.2 \ln 10}{\pi}} \right) - \frac{1.2 \log_2 10 \, \sigma_x^2}{d^2 \theta_{3 \text{dB}}^2}.
\end{equation}
Taking the derivative of $\theta_{3\text{dB}}$ in~\eqref{Eq_2D_Cerg_to_be_opt}, we have
\begin{equation}\label{Eq_2D_dtheta}
	\frac{d C_{\text{erg}}^{2\text{D},+}}{d \theta_{3\text{dB}}} = - \frac{1}{\theta_{3\text{dB}} \ln 2} + \frac{2.4 \log_2 10 \, \sigma_x^2}{d^2 \theta_{3\text{dB}}^3}.
\end{equation}
By setting~\eqref{Eq_2D_dtheta} to zero, we can calculate the optimal beamwidth as
\begin{equation}\label{Eq_2D_Optimal_theta}
	\theta_{3 \text{dB}}^\ast = \frac{1}{d} \sqrt{2.4 \ln 10\mathbf{\Sigma}^{2\text{D}} \left( 1,1 \right)}
\end{equation}
where $\mathbf{\Sigma}^{2\text{D}} \left( m, n \right)$ is the $\left(m, n \right)$th entry of $\mathbf{\Sigma}^{2\text{D}}$. The results reveal that the optimal beam is determined by the link distance and the variance along the $x$-axis, while remaining invariant to transmit power variations. This is because changes in the transmit power only scale the maximum channel capacity but do not change the optimal $\theta_{3 \text{dB}}^\ast$.

\subsection{3D Optimal Beam Pattern}
Similar to the 2D cases, under the asymptotic case when the beam pattern much larger than the ratio of the positioning error to the link distance $d$, $\mathcal{D}_0$ in~\eqref{Eq_3D_Cerg_infty} expands to cover the entire  $\mathbb{R}^2$. $M_0^{3\text{D}}$ and $M_2^{3\text{D}}$ in~\eqref{Eq_3D_Cerg_infty} can be approximated as
\begin{equation}\label{Eq_3D_M0_approximate}
	M_0^{3\text{D}} \approx \iint_{-\infty}^{+\infty} f_{\hat{x}_u, \hat{z}_u} \left( x, z \right) dxdz = 1
\end{equation}
and
\begin{equation}\label{Eq_3D_M2_approximate}
	\begin{aligned}
		M_2^{3\text{D}} &\approx \iint_{-\infty}^{+\infty} \left[ x, z \right] \mathbf{A} \left[ x, z \right]^T f_{\hat{x}_u, \hat{z}_u} \left( x, z \right) dxdz \\
		&= \mathbb{E} \left[ \left[ x, z \right] \mathbf{A} \left[ x, z \right]^T \right] = \mathbb{E} \left[ \operatorname{tr} \left( \mathbf{A} \left[ x, z \right]^T \left[ x, z \right] \right) \right] \\
		&= \operatorname{tr} \left( \mathbf{A} \mathbb{E} \left[ \left[ x, z \right]^T  \left[ x, z \right] \right] \right) = \operatorname{tr} \left( \mathbf{A} \mathbf{\Sigma}^{'2\text{D}} \right).
	\end{aligned}
\end{equation}
Similar to $\kappa$ in the 2D cases, we also have $\eta \! \to 0$. Substituting~\eqref{Eq_3D_M0_approximate},~\eqref{Eq_3D_M2_approximate} and $\eta \! = \! 0$ into~\eqref{Eq_3D_Cerg_infty}, we can approximate that 
\begin{equation}\label{Eq_3D_Cerg_with_Pmax}
		\begin{aligned}
		C_{\text{erg}}^{3\text{D},+} &\approx \log_2 \left( 1  +  \frac{P_{\max} A_e}{ 4 \pi d^2 N_0} \right)  \\
		&- \frac{1.2 \log_2 10}{d^2} \cdot \frac{\frac{P_{\max} A_e}{ 4 \pi d^2 N_0}}{\frac{P_{\max} A_e}{ 4 \pi d^2 N_0} + 1} \cdot \operatorname{tr} \left( \mathbf{A} \mathbf{\Sigma}^{'2\text{D}} \right) .
	\end{aligned}
\end{equation}
In the high-SNR regime, $\frac{P_{\max} A_e}{ 4 \pi d^2 \! N_0} \gg 1$. Substituting~\eqref{Eq_3D_Pmax} into~\eqref{Eq_3D_Cerg_with_Pmax} to eliminate $P_{\max}$, the 3D capacity reduces to
\begin{equation}\label{Eq_3D_Cerg_to_be_opt}
	\begin{aligned}
		C_{\text{erg}}^{3\text{D},+} &= \log_2 \left( \frac{1.2 \ln 10 A_e P_t}{ \pi \left(4 \pi d \right)^2  N_0 } \right) +  \frac{1}{2}  \log_2 \left|  \mathbf{A} \right|   \\
		& -  \frac{1.2 \log_2 10}{d^2}  \operatorname{tr} \left(  \mathbf{A} \mathbf{\Sigma}^{'2\text{D}}  \right).
	\end{aligned}
\end{equation}

Taking the gradient of~\eqref{Eq_3D_Cerg_to_be_opt} with respect to $\mathbf{A}$ and setting it equal to zero as
\begin{equation}\label{Eq_3D_Cerg_gradient}
	\begin{aligned}
		&\nabla_\mathbf{A} \left( \frac{1}{2} \log_2 \left|  \mathbf{A} \right| - \frac{1.2 \log_2 10}{d^2} \operatorname{tr} \left( \mathbf{A} \mathbf{\Sigma}^{'2\text{D}} \right)  \right) \\
		&= \frac{1}{2 \ln 2} \mathbf{A}^{-1} - \frac{1.2 \log_2 10}{d^2} \mathbf{\Sigma}^{'2\text{D}} = 0
	\end{aligned}
\end{equation}
and the optimal $\mathbf{A}^\ast$ can be computed as
\begin{equation}\label{Eq_3D_optimal_A}
	\mathbf{A}^\ast = \frac{d^2}{2.4 \ln 10} \left(   \mathbf{\Sigma}^{'2\text{D}} \right)^{-1}.
\end{equation}
The result in~\eqref{Eq_3D_optimal_A} is insightful, since the channel capacity is optimized when the beam pattern and the error distribution have the same spatial characteristics, which enables the beam to cover the user distribution more effectively. Substituting~\eqref{Eq_3D_A} and~\eqref{Eq_3D_Cov_Matrix_xz} into~\eqref{Eq_3D_optimal_A}, the optimal beam pattern is calculated as
\begin{equation}\label{Eq_3D_optimal_beamwidth}
	\begin{aligned}
		\theta_{3\text{dB}}^\ast &=  \frac{1}{d} \left( 2.4 \ln 10 \cdot \left( \mathbf{\Sigma}^{3\text{D}} \left( 1,1 \right) - \frac{\left( \mathbf{\Sigma}^{3\text{D}} \left( 1,3 \right)  \right)^2}{\mathbf{\Sigma}^{3\text{D}} \left( 3,3 \right) } \right)  \right)^{\frac{1}{2}}, \\
		\phi_{3\text{dB}}^\ast &= \frac{1}{d} \left( 2.4 \ln 10 \cdot \left( \mathbf{\Sigma}^{3\text{D}} \left( 3,3 \right)  - \frac{\left( \mathbf{\Sigma}^{3\text{D}} \left( 1,3 \right)  \right)^2}{\mathbf{\Sigma}^{3\text{D}} \left( 1,1 \right) } \right) \right)^{\frac{1}{2}},  \\
		m^\ast &= -\frac{d^2 \mathbf{\Sigma}^{3\text{D}} \left( 1,3 \right) }{2.4 \ln 10 \cdot \left( \mathbf{\Sigma}^{3\text{D}} \left( 1,1 \right)  \mathbf{\Sigma}^{3\text{D}} \left( 3,3 \right)  - \left( \mathbf{\Sigma}^{3\text{D}} \left( 1,3 \right)  \right)^2 \right)}.
	\end{aligned}
\end{equation}
Notably, $m$ depends on the beam rotation angle $\psi$, defined as the angle from the positive $x$-axis to the beam's major axis. Since $\mathbf{A}$ can be eigendecomposed as
\begin{equation}\label{Eq_3D_eigendecompose_A}
	\mathbf{A} \! = \! \mathbf{U} \mathbf{Y} \mathbf{U}^T \! = \!
	\begin{bmatrix}
		\cos\psi & -\sin\psi \\
		\sin\psi & \cos\psi
	\end{bmatrix}
	\begin{bmatrix}
		\zeta_1 & 0 \\
		0 & \zeta_2
	\end{bmatrix}
	\begin{bmatrix}
		\cos\psi & \sin\psi \\
		-\sin\psi & \cos\psi
	\end{bmatrix} \! ,
\end{equation}
the optimal $\psi^\ast$ can be evaluated by combining~\eqref{Eq_3D_A},~\eqref{Eq_3D_optimal_beamwidth} and~\eqref{Eq_3D_eigendecompose_A} as
\begin{equation}\label{Eq_3D_optimal_psi}
	\psi^\ast = \frac{1}{2} \mathrm{atan2} \left( 2 \mathbf{\Sigma}^{3\text{D}} \left( 1,3 \right), \mathbf{\Sigma}^{3\text{D}} \left( 1,1 \right) - \mathbf{\Sigma}^{3\text{D}} \left( 3,3 \right) \right).
\end{equation}
Equation~\eqref{Eq_3D_optimal_beamwidth} indicates that the beam pattern that optimizes the channel capacity depends on both the link distance and the user position uncertainty. Longer link distance favor narrower beams to increase signal strength, while larger positioning errors require wider beams to effectively cover the potential user distribution.

\begin{figure}
	\centering
	\includegraphics[width=0.45\textwidth]{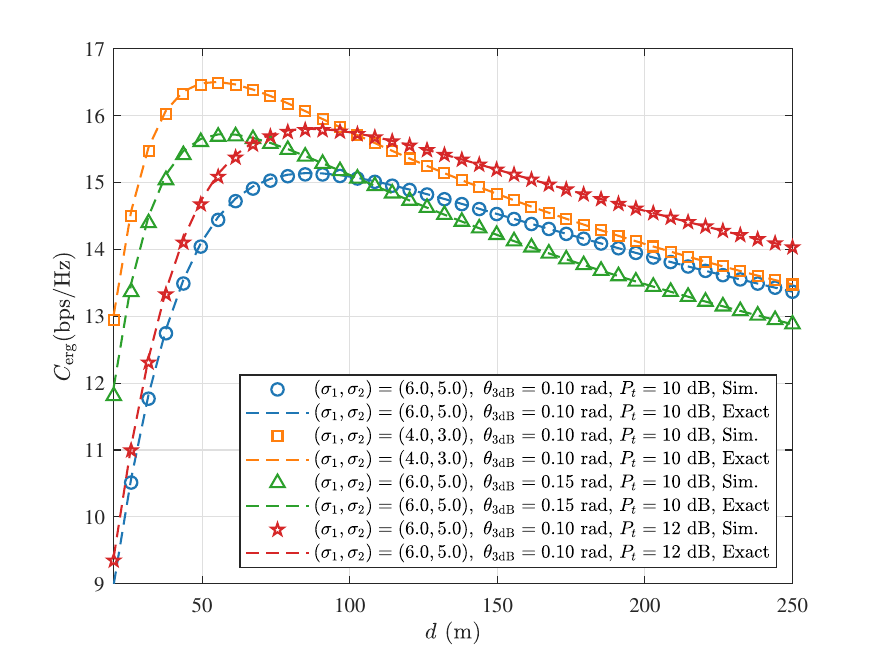}\\
	\caption{Capacity as a function of the link distance $d$ for 2D beamforming systems. $A_e = 1$~cm$^2$; $\varphi = \pi / 3$~rad.}\label{Figure_2D_d}
\end{figure}

\begin{figure}
	\centering
	\includegraphics[width=0.45\textwidth]{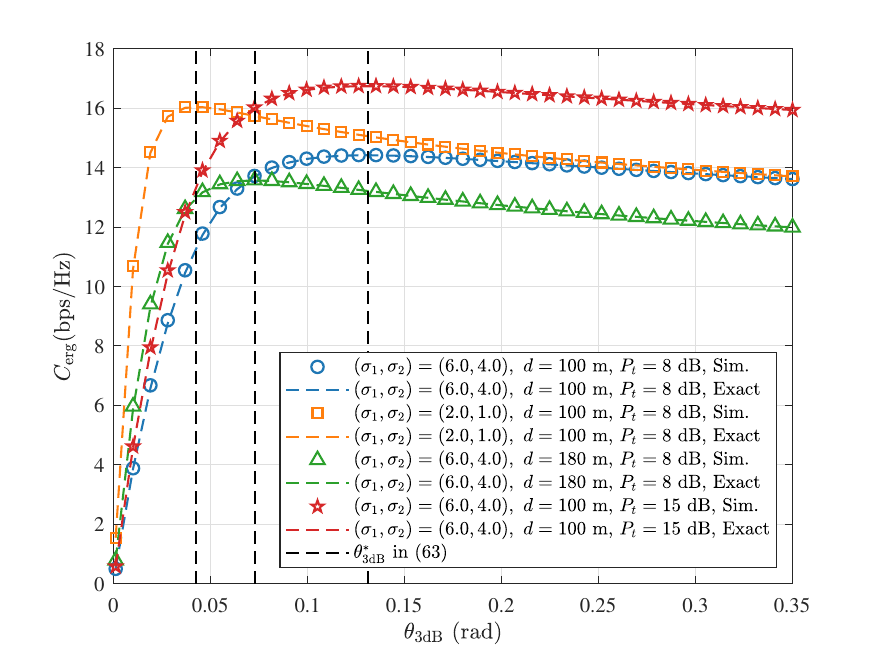}\\
	\caption{Capacity as a function of the 3-dB beamwidth $\theta_{3\text{dB}}$ for 2D beamforming systems. $A_e = 1$~cm$^2$; $\varphi = \pi / 6$~rad. The optimal $\theta_{3\text{dB}}^\ast$ of the blue and red curves overlap.}\label{Figure_2D_theta}
\end{figure}

\section{Numerical Results}\label{Numerical Results}
In this section, we show numerical results for the derived capacity expressions of both 2D and 3D positioning-assisted beamforming systems under various parameter settings. In all figures, ``Sim.'' denotes Monte Carlo simulation results, and ``Exact'' denotes the expression in~\eqref{Eq_2D_Cerg_infty} for the 2D case and~\eqref{Eq_3D_Cerg_infty} for the 3D case.

Figure~\ref{Figure_2D_d} illustrates the relationship between channel capacity and link distance $d$ under varying positioning errors, $\theta_{3\text{dB}}$, and transmit power. The theoretical and simulated capacity curves show an excellent agreement, which confirms the validity of analytical derivations. The capacity shows a non-monotonic dependence on distance, first increasing and then decreasing with larger $d$. This variation is attributed to the balance between beam coverage and path loss. When the distance is small, the beam cannot fully cover the user region, while for large distance the received power decreases rapidly due to propagation attenuation. As a result, the maximum capacity is achieved at a moderate distance. In addition, smaller positioning errors and higher transmit powers lead to larger capacity, while variations in beamwidth influence the results through their effect on both coverage and interference.

\begin{figure}
	\centering
	\includegraphics[width=0.45\textwidth]{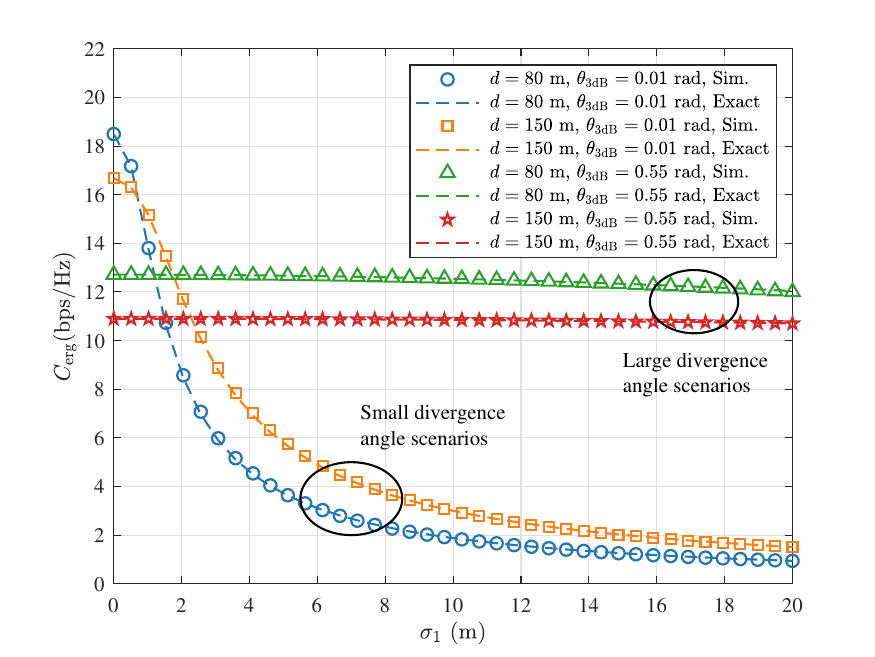}\\
	\caption{Capacity as a function of the error standard deviation $\sigma_1$ along the major axis of the PDF contours for 2D beamforming systems. $P_t = 5$~dB; $A_e = 1$~cm$^2$; $\varphi = \pi / 4$~rad; $\sigma_2 = 0.8 \, \sigma_1$.}\label{Figure_2D_sigma1}
\end{figure}

Figure~\ref{Figure_2D_theta} characterizes the channel capacity versus $\theta_{3\text{dB}}$ under various positioning errors, link distance, and transmit power. The capacity reaches its maximum at an intermediate beamwidth, decaying toward both wider and narrower beamwidth. The maximum capacity computed by~\eqref{Eq_2D_Optimal_theta} are indicated in the figure by vertical dashed lines and align with the observed trend. Systems with lower positioning errors or higher transmit power consistently achieve enhanced capacity across all beamwidth values. In contrast, the effect of link distance on capacity changes with $\theta_{3\text{dB}}$.

Figure~\ref{Figure_2D_sigma1} presents the channel capacity as a function of the error standard deviation $\sigma_1$ along the major axis of the PDF contours under varying link distance and $\theta_{3\text{dB}}$. As $\sigma_1$ increases, the channel capacity decreases due to the growing mismatch between the beam coverage and the user location uncertainty. For small divergence angles with $\theta_{3\text{dB}} = 0.01$~rad, the highly concentrated beam renders the system more sensitive to positioning errors, resulting in rapid capacity degradation at large $\sigma_1$. In contrast, for large divergence angles with $\theta_{3\text{dB}} = 0.45$~rad, the wider beam coverage alleviates the impact of positioning errors and leads to a slower capacity degradation. It is further observed that the orange curve associated with a larger $d$ achieves higher channel capacity than the blue curve with a shorter $d$. This behavior agrees with~\eqref{Eq_2D_Optimal_theta}, where the optimal beamwidth is jointly determined by the link distance and positioning error.

It is worth noting that the above figures are obtained by using~\eqref{Eq_2D_Cinst_infty} rather than~\eqref{Eq_2D_Cinst_x} to compute the capacity in~\eqref{Eq_2D_Cerg_infty}. Although the equivalence between~\eqref{Eq_2D_Cinst_x} and~\eqref{Eq_2D_Cinst_infty} is established only under the two asymptotic conditions in Section~\ref{Two Asymptotic Cases}, the agreement between the Sim. and Exact curves demonstrates that the approximation remains sufficiently accurate in other regimes.

Figure~\ref{Figure_3D_d} illustrates the impact of the link distance $d$ on the capacity under different positioning errors, $\theta_{3\text{dB}}$ and transmit power. The variations of the curves in 3D scenarios follow a trend similar to that observed in 2D case. A reduction in positioning error or an increase in transmit power leads to a higher channel capacity across all link distance, while the influence of $\theta_{3\text{dB}}$ on capacity varies with distance.

\begin{figure}
	\centering
	\includegraphics[width=0.45\textwidth]{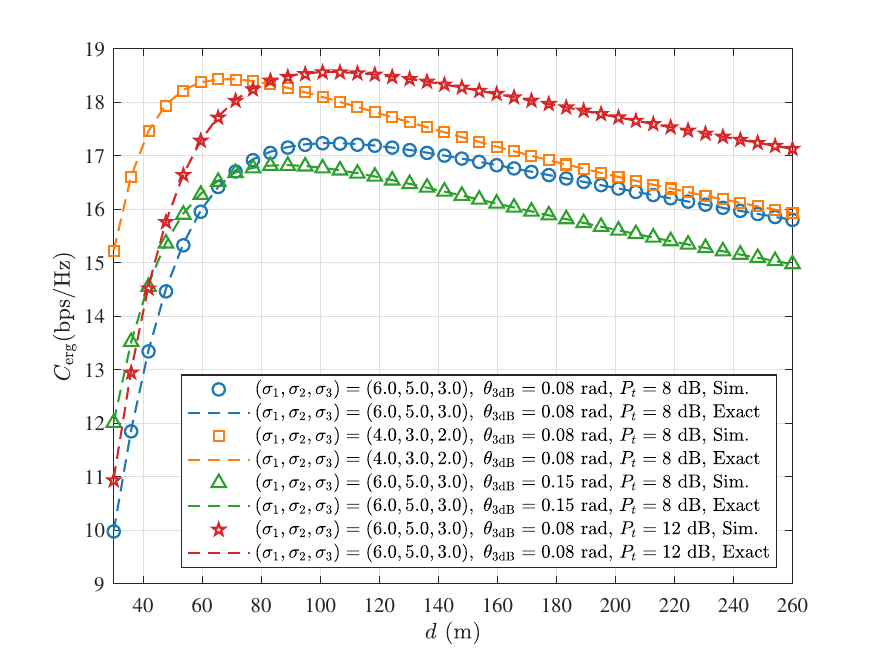}\\
	\caption{Capacity as a function of the link distance $d$ for 3D beamforming systems. $A_e = 1$~cm$^2$; $\phi_{3\text{dB}} = 0.12$~rad; $\left( \varphi_x, \varphi_y, \varphi_z \right) = \left( \pi/6, \pi/3, \pi/4 \right)$.}\label{Figure_3D_d}
\end{figure}

\begin{figure}
	\centering
	\includegraphics[width=0.45\textwidth]{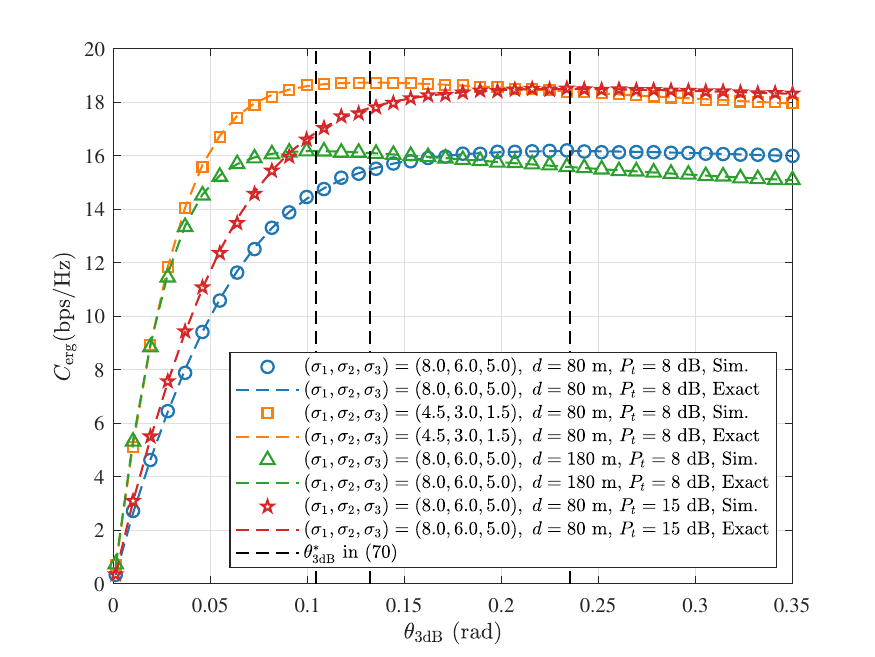}\\
	\caption{Capacity as a function of the 3-dB beamwidth $\theta_{3\text{dB}}$ for 3D beamforming systems. $A_e = 1$~cm$^2$; $\phi_{3\text{dB}} = \phi_{3\text{dB}}^\ast$; $\psi = \psi^\ast$; $\left( \varphi_x, \varphi_y, \varphi_z \right) = \left( \pi/6, \pi/3, \pi/4 \right)$. The optimal $\theta_{3\text{dB}}^\ast$ of the blue and red curves overlap.}\label{Figure_3D_theta}
\end{figure}

Figure~\ref{Figure_3D_theta} shows the capacity as the function of $\theta_{3\text{dB}}$ in 3D scenarios, where $\phi_{3\text{dB}}$ and $\psi$ are set to its optimal value according to~\eqref{Eq_3D_optimal_beamwidth}. It can be observed that different system parameters affect both the capacity and its optimal beamwidth. Specifically, reducing the positioning error increases the capacity for all values of $\theta_{3\text{dB}}$ and shifts the optimal point to a smaller beamwidth. Increasing the link distance also changes the capacity and leads to a smaller optimal point, while higher transmit power improves the overall capacity without changing the optimal beam pattern. These trends are all consistent with the analytical relationship derived in~\eqref{Eq_3D_optimal_beamwidth}.
It is also noted that the functional dependence on $\phi_{3\text{dB}}$ exhibits trends similar to those for $\theta_{3\text{dB}}$, so only the results with respect to $\theta_{3\text{dB}}$ are presented for clarity.

Figure~\ref{Figure_3D_sigma1} exhibits the channel capacity as a function of the error standard deviation $\sigma_1$ along the major axis of the PDF contours with different link distance and beam pattern for 3D beamforming systems. Similar to the 2D cases, for small divergence angles $\theta_{3\text{dB}} = 0.01$~rad, $\phi_{3\text{dB}} = 0.02$~rad, the channel capacity decreases rapidly with increasing error due to the high sensitivity of narrow beams to positioning errors. Meanwhile, for large divergence angles $\theta_{3\text{dB}} = 0.45$~rad, $\phi_{3\text{dB}} = 0.55$~rad, even large positioning errors remain within the beam coverage, making the channel capacity more robust to positioning errors.

\begin{figure}
	\centering
	\includegraphics[width=0.45\textwidth]{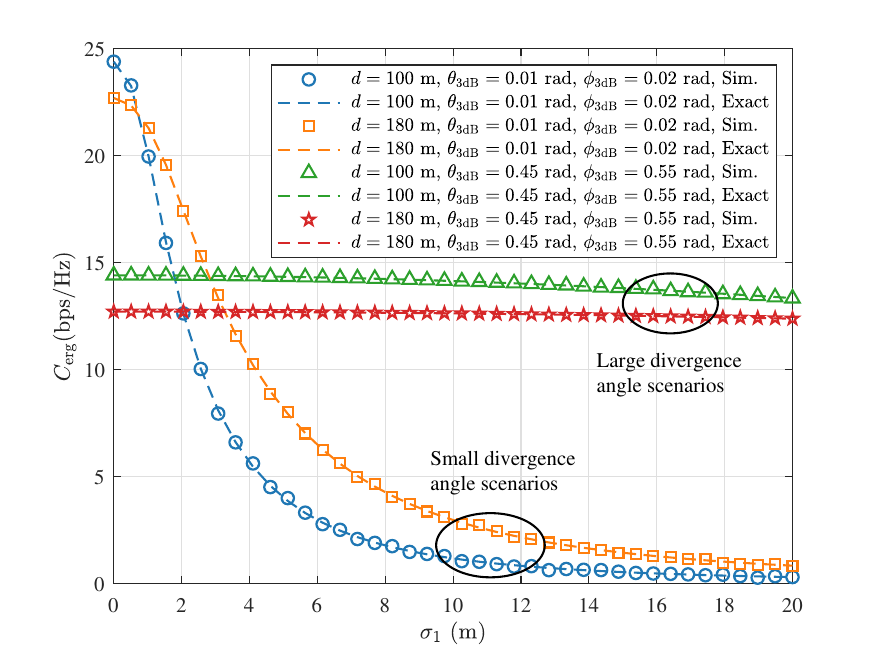}\\
	\caption{Capacity as a function of the error standard deviation $\sigma_1$ along the major axis of the PDF contours for 3D beamforming systems. $P_t = 8$~dB; $A_e = 1$~cm$^2$; $\left( \varphi_x, \varphi_y, \varphi_z \right) = \left( 0, 0, 0 \right)$; $\sigma_2 = 0.8 \, \sigma_1$; $\sigma_3 = 0.6 \, \sigma_1$.}\label{Figure_3D_sigma1}
\end{figure}

\begin{figure}
	\centering
	\includegraphics[width=0.45\textwidth]{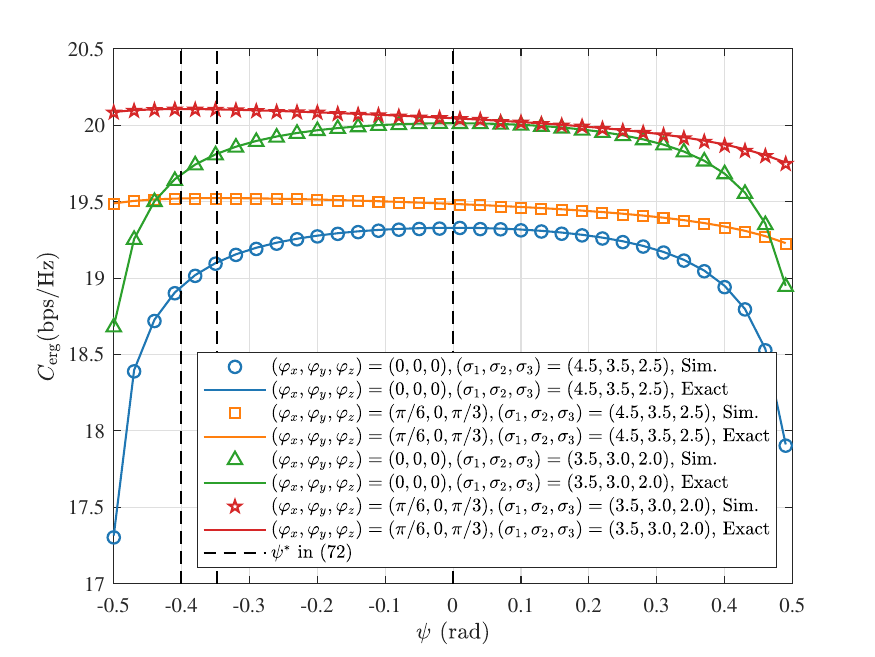}\\
	\caption{Capacity as a function of the beam rotation angle $\psi$ for 3D beamforming systems. $d = 150$~m; $P_t = 12$~dB; $A_e = 1$~cm$^2$; $\theta_{3 \text{dB}} = \theta_{3 \text{dB}}^\ast$; $\phi_{3 \text{dB}} = \phi_{3 \text{dB}}^\ast$. The optimal $\theta_{3\text{dB}}^\ast$ of the blue and green curves overlap.}\label{Figure_3D_m}
\end{figure}

Figure~\ref{Figure_3D_m} characterizes the channel capacity as a function of the beam rotation angle $\psi$ in 3D cases. As illustrated by the blue and green curves, when the directional angle of the positioning error is zero, the maximum channel capacity is achieved when the beam rotation angle is also zero. Furthermore, as demonstrated by the orange and red curves, when the directional angle of the positioning error changes, the beam rotation angle should be adjusted accordingly to maintain the maximum channel capacity. The results are consistent with the conclusion drawn from~\eqref{Eq_3D_optimal_psi}, which indicates that the beam rotation angle should be aligned with the directional angle of the contours of  positioning error ellipse in order to better match the spatial distribution of users, and thus maximize the channel capacity.

\section{Conclusion}\label{Conclusion}
We investigate the instantaneous and ergodic capacity of positioning-assisted beamforming systems, and derived their closed-form approximations by considering link distance, transmit power, positioning error and beam pattern for both 2D and 3D scenarios. Besides, we derived closed-form expressions for the optimal 2D and 3D beam pattern, which demonstrate that the channel capacity can be maximized through appropriately designed beam pattern.

{\appendices
\section{Proof of Theorem 1}\label{Appendix_Theorem1}
	\begin{IEEEproof}
		To prove Theorem~\ref{Theorem_Asymptotic_Case1}, it is equivalent to prove
		\begin{equation}\label{Eq_2D_Asymptotic_Case1_lim}
			\lim_{\frac{\theta_{3 \text{dB}} d}{\sigma_x} \to 0} \frac{ \int_{-\infty}^{+\infty} C_{\text{inst}}^{2\text{D}} \left( x \right) \cdot f_{\hat{x}_u} \left( x \right)  dx }{ f_{\hat{x}_u} (0) \cdot  \int_{-\infty}^{+\infty} C_{\text{inst}}^{2\text{D}} \left( x \right) dx } = 1.
		\end{equation}
		We use the squeeze theorem to prove~\eqref{Eq_2D_Asymptotic_Case1_lim}. Since $f_{\hat{x}_u} \left( x \right)$ is even and monotonically decreasing over $[0,+\infty)$, it satisfies $f_{\hat{x}_u} \left( x \right) \leq f_{\hat{x}_u} \left( 0 \right)$ for any $x$. Hence, we have
		\begin{equation}
			\int_{-\infty}^{+\infty} C_{\text{inst}}^{2\text{D}} \left( x \right) \cdot f_{\hat{x}_u} \left( x \right)  dx \leq f_{\hat{x}_u} \left( 0 \right) \cdot  \int_{-\infty}^{+\infty} C_{\text{inst}}^{2\text{D}} \left( x \right) dx
		\end{equation}
		and thus, the upper bound is
		\begin{equation}\label{Eq_2D_Asymptotic_Case1_UB_final}
			\frac{ \int_{-\infty}^{+\infty} C_{\text{inst}}^{2\text{D}} \left( x \right) \cdot f_{\hat{x}_u} \left( x \right)  dx }{ f_{\hat{x}_u} \left( 0 \right) \cdot \int_{-\infty}^{+\infty} C_{\text{inst}}^{2\text{D}} \left( x \right) dx } \leq 1.
		\end{equation}
		Then, we derive the lower bound. Let $\delta=\sqrt{\sigma_x \cdot \theta_{3\text{dB}} d}>0$. Since $f_{\hat{x}_u} \left( x \right)$ is even and monotonically decreasing over $[0,+\infty)$, it follows that $f_{\hat{x}_u} \left( x \right) \geq f_{\hat{x}_u} \left( \delta \right)$ for any $\left| x \right| \leq \delta$. The numerator of~\eqref{Eq_2D_Asymptotic_Case1_lim} is decomposed into two parts as
		\begin{equation}\label{Eq_2D_Asymptotic_Case1_relax}
			\begin{aligned}
				&\int_{-\infty}^{+\infty} C_{\text{inst}}^{2\text{D}} \left( x \right) \cdot f_{\hat{x}_u} \left( x \right)  dx \\
				= &\int_{\left| x \right| \leq \delta} C_{\text{inst}}^{2\text{D}} \left( x \right) \cdot f_{\hat{x}_u} \left( x \right)  dx + \int_{\left| x \right| > \delta} C_{\text{inst}}^{2\text{D}} \left( x \right) \cdot f_{\hat{x}_u} \left( x \right)  dx \\
				\geq & \int_{\left| x \right| \leq \delta} C_{\text{inst}}^{2\text{D}} \left( x \right) \cdot f_{\hat{x}_u} \left( x \right)  dx \geq  f_{\hat{x}_u} \left( \delta \right) \cdot \int_{\left| x \right| \leq \delta} C_{\text{inst}}^{2\text{D}} \left( x \right) dx.
			\end{aligned}
		\end{equation}
		Divide the left-hand and right-hand side of~\eqref{Eq_2D_Asymptotic_Case1_relax} by $f_{\hat{x}_u} \left( 0 \right) \times \int_{-\infty}^{+\infty} C_{\text{inst}}^{2\text{D}} \left( x \right) dx $, and we can obtain the lower bound as
		\begin{equation}\label{Eq_2D_Asymptotic_Case1_LB}
			\frac{ \int_{-\infty}^{+\infty} C_{\text{inst}}^{2\text{D}} \left( x \right) \cdot f_{\hat{x}_u} \left( x \right)  dx }{ f_{\hat{x}_u} \left( 0 \right) \cdot \int_{-\infty}^{+\infty} C_{\text{inst}}^{2\text{D}} \left( x \right) dx } \geq  \underbrace{\frac{f_{\hat{x}_u} \left( \delta \right)}{f_{\hat{x}_u} \left( 0 \right)}}_{T_1} \cdot \underbrace{\frac{\int_{\left| x \right| \leq \delta} C_{\text{inst}}^{2\text{D}} \left( x \right) dx}{\int_{-\infty}^{+\infty} C_{\text{inst}}^{2\text{D}} \left( x \right) dx}}_{T_2}.
		\end{equation}
		Under the asymptotic case $\theta_{3 \text{dB}} d / \sigma_x \to 0$, $T_1$ can be calculated by substituting $\delta=\sqrt{\sigma_x \cdot \theta_{3\text{dB}} d}$ into~\eqref{Eq_2D_fx} as
		\begin{equation}\label{Eq_2D_Asymptotic_Case1_T1}
			T_1 = \frac{f_{\hat{x}_u} \left( \delta \right)}{f_{\hat{x}_u} \left( 0 \right)} = \exp \left( -\frac{\delta^2}{2\sigma_x^2} \right) = \exp \left( -\frac{\theta_{3 \text{dB}} d}{2\sigma_x} \right) \to 1.
		\end{equation}
		Then we calculate $T_2$ in~\eqref{Eq_2D_Asymptotic_Case1_LB}. Let $y = x / \left( \theta_{3 \text{dB}} d \right)$ and rewrite $T_2$ as
		\begin{equation}\label{Eq_2D_Asymptotic_Case1_T2}
			\begin{aligned}
				T_2 &= \frac{\int_{\left| x \right| \leq \delta} C_{\text{inst}}^{2\text{D}} \left( x \right) dx}{\int_{-\infty}^{+\infty} C_{\text{inst}}^{2\text{D}} \left( x \right) dx} \\
				&= \frac{\int_{\left| y \right| \leq \frac{\delta}{\theta_{3 \text{dB}} d}} \log_2 \left( 1 + \frac{P_{\max} A_e}{ 4 \pi d^2 N_0}\cdot 10^{-1.2 \, y^2} \right) dy}{\int_{-\infty}^{+\infty} \log_2 \left( 1 + \frac{P_{\max} A_e}{ 4 \pi d^2 N_0}\cdot 10^{-1.2 \, y^2} \right) dy}
			\end{aligned}
		\end{equation}
		Under the asymptotic case $\theta_{3 \text{dB}} d / \sigma_x \to 0$, the integral region of the numerator in the right-hand side of~\eqref{Eq_2D_Asymptotic_Case1_T2} satisfies
		\begin{equation}\label{Eq_2D_Asymptotic_Case1_y}
			\left| y \right| \leq \frac{\delta}{\theta_{3 \text{dB}} d} = \sqrt{\frac{\sigma_x}{\theta_{3 \text{dB}} d}} \to +\infty,
		\end{equation}
		and thus, we have $T_2 \to 1$. Substituting $T_1 \to 1$ and $T_2 \to 1$ into~\eqref{Eq_2D_Asymptotic_Case1_LB}, we can calculate the lower bound as
		\begin{equation}\label{Eq_2D_Asymptotic_Case1_LB_final}
			\frac{ \int_{-\infty}^{+\infty} C_{\text{inst}}^{2\text{D}} \left( x \right) \cdot f_{\hat{x}_u} \left( x \right)  dx }{ f_{\hat{x}_u} \left( 0 \right) \cdot \int_{-\infty}^{+\infty} C_{\text{inst}}^{2\text{D}} \left( x \right) dx } \geq  1.
		\end{equation}
		Combining~\eqref{Eq_2D_Asymptotic_Case1_UB_final} and~\eqref{Eq_2D_Asymptotic_Case1_LB_final},~\eqref{Eq_2D_Asymptotic_Case1_lim} is proved.
	\end{IEEEproof}

\section{Proof of Theorem 1}\label{Appendix_Theorem2}
	\begin{IEEEproof}
		To prove Theorem~\ref{Theorem_Asymptotic_Case2}, it is equivalent to prove
		\begin{equation}\label{Eq_2D_Asymptotic_Case2_lim}
			\lim_{\frac{\theta_{3 \text{dB}} d}{\sigma_x} \to +\infty} \frac{ \int_{-\infty}^{+\infty} C_{\text{inst}}^{2\text{D}} \left( x \right) \cdot f_{\hat{x}_u} \left( x \right)  dx }{ C_{\text{inst}}^{2\text{D}} \left( 0 \right) } = 1.
		\end{equation}
		We also use the squeeze theorem to prove~\eqref{Eq_2D_Asymptotic_Case2_lim}.  Since $C_{\text{inst}}^{2\text{D}} \left( x \right)$ is even and monotonically decreasing over $[0,+\infty)$, it satisfies $C_{\text{inst}}^{2\text{D}} \left( x \right) \leq C_{\text{inst}}^{2\text{D}} \left( 0 \right)$ for any $x$. Hence, we have
		\begin{equation}
			\int_{-\infty}^{+\infty} \! C_{\text{inst}}^{2\text{D}} \! \left( x \right) \cdot f_{\hat{x}_u} \! ( x )  dx \leq  C_{\text{inst}}^{2\text{D}}  ( 0 ) \cdot  \! \int_{-\infty}^{+\infty} \! f_{\hat{x}_u} \! ( x )  dx = C_{\text{inst}}^{2\text{D}}  ( 0 ) 
		\end{equation}
		and thus, the upper bound is
		\begin{equation}\label{Eq_2D_Asymptotic_Case2_UB_final}
			\frac{ \int_{-\infty}^{+\infty} C_{\text{inst}}^{2\text{D}} \left( x \right) \cdot f_{\hat{x}_u} \left( x \right)  dx }{ C_{\text{inst}}^{2\text{D}} \left( 0 \right) } \leq 1.
		\end{equation}
		Then, we derive the lower bound. Let $\delta=\sqrt{\sigma_x \cdot \theta_{3\text{dB}} d}>0$. Since $C_{\text{inst}}^{2\text{D}} \left( x \right)$ is even and monotonically decreasing over $[0,+\infty)$, it follows that $C_{\text{inst}}^{2\text{D}} \left( x \right) \geq C_{\text{inst}}^{2\text{D}} \left( \delta \right)$ for any $\left| x \right| \leq \delta$. The numerator of~\eqref{Eq_2D_Asymptotic_Case2_lim} is decomposed into two parts as
		\begin{equation}\label{Eq_2D_Asymptotic_Case2_relax}
			\begin{aligned}
				&\int_{-\infty}^{+\infty} C_{\text{inst}}^{2\text{D}} \left( x \right) \cdot f_{\hat{x}_u} \left( x \right)  dx \\
				= &\int_{\left| x \right| \leq \delta} C_{\text{inst}}^{2\text{D}} \left( x \right) \cdot f_{\hat{x}_u} \left( x \right)  dx + \int_{\left| x \right| > \delta} C_{\text{inst}}^{2\text{D}} \left( x \right) \cdot f_{\hat{x}_u} \left( x \right)  dx \\
				\geq & \int_{\left| x \right| \leq \delta} C_{\text{inst}}^{2\text{D}} \left( x \right) \cdot f_{\hat{x}_u} \left( x \right)  dx \geq   C_{\text{inst}}^{2\text{D}}  \left( \delta \right) \cdot \int_{\left| x \right| \leq \delta} f_{\hat{x}_u} \left( x \right) dx.
			\end{aligned}
		\end{equation}
		Divide the left-hand and right-hand side of~\eqref{Eq_2D_Asymptotic_Case2_relax} by $C_{\text{inst}}^{2\text{D}} \left( 0 \right)$, and we can obtain the lower bound as
		\begin{equation}\label{Eq_2D_Asymptotic_Case2_LB}
			\frac{ \int_{-\infty}^{+\infty} C_{\text{inst}}^{2\text{D}} \left( x \right) \cdot f_{\hat{x}_u} \left( x \right)  dx }{C_{\text{inst}}^{2\text{D}} \left( 0 \right)} \geq  \underbrace{\frac{C_{\text{inst}}^{2\text{D}} \left( \delta \right)}{C_{\text{inst}}^{2\text{D}} \left( 0 \right)}}_{T_3} \cdot \underbrace{\int_{\left| x \right| \leq \delta} f_{\hat{x}_u} \left( x \right) dx}_{T_4}.
		\end{equation}
		Combining~\eqref{Eq_2D_Asymptotic_Case2_LB} with~\eqref{Eq_2D_Cinst_x},  the numerator of $T_3$ is
		\begin{equation}\label{Eq_2D_Asymptotic_Case2_T3_numerator}
			C_{\text{inst}}^{2\text{D}} \left( \delta \right) = \log_2 \left( 1 + \frac{P_{\max} A_e}{ 4 \pi d^2 N_0}\cdot 10^{-1.2 \frac{\delta^2}{d^2\theta_{3 \text{dB}}^2}} \right)
		\end{equation}
		Under the asymptotic case $\theta_{3 \text{dB}} d / \sigma_x \to +\infty$, The exponent term in~\eqref{Eq_2D_Asymptotic_Case2_T3_numerator} is
		\begin{equation}\label{Eq_2D_Asymptotic_Case2_T3_numerator_exponent}
			\frac{\delta^2}{\theta_{3 \text{dB}}^2 d^2} = \frac{\sigma_x}{\theta_{3 \text{dB}} d} \to 0
		\end{equation}
		Substituting~\eqref{Eq_2D_Asymptotic_Case2_T3_numerator_exponent} into~\eqref{Eq_2D_Asymptotic_Case2_T3_numerator}, we have $C_{\text{inst}}^{2\text{D}} \left( \delta \right) \to C_{\text{inst}}^{2\text{D}} \left( 0 \right)$, and thus $T_3 \to 1$. Under the asymptotic case $\theta_{3 \text{dB}} d / \sigma_x \to +\infty$, $T_4$ can be computed as
		\begin{equation}\label{Eq_2D_Asymptotic_Case2_T4}
			T_4 = 1 - 2 Q \left( \frac{\delta}{\sigma_x} \right) = 1 - 2 Q \left( \sqrt{\frac{\theta_{3 \text{dB}} d}{\sigma_x}} \right) \to 1.
		\end{equation}
		Substituting $T_3 \to 1$ and $T_4 \to 1$ into~\eqref{Eq_2D_Asymptotic_Case2_LB}, we can calculate the lower bound as
		\begin{equation}\label{Eq_2D_Asymptotic_Case2_LB_final}
			\frac{ \int_{-\infty}^{+\infty} C_{\text{inst}}^{2\text{D}} \left( x \right) \cdot f_{\hat{x}_u} \left( x \right)  dx }{C_{\text{inst}}^{2\text{D}} \left( 0 \right)} \geq  1.
		\end{equation}
		Combining~\eqref{Eq_2D_Asymptotic_Case2_UB_final} and~\eqref{Eq_2D_Asymptotic_Case2_LB_final}, eq.~\eqref{Eq_2D_Asymptotic_Case2_lim} is proved.
	\end{IEEEproof}

\section{Derivation of Closed-Form Expression of  $M_0^{3\text{D}}$}\label{Appendix_M0^3D}
$M_0^{3\text{D}}$ in~\eqref{Eq_3D_Cerg_infty} can be written as
\begin{equation}\label{Eq_3D_M0_initial}
	M_0^{3\text{D}} = \iint_{\mathcal{D}_0} f_{\hat{x}_u, \hat{z}_u} \left( x, z \right) dxdz
\end{equation}
where $f_{\hat{x}_u, \hat{z}_u} \left( x, z \right)$, $\mathcal{D}_0$ are defined in~\eqref{Eq_3D_fxz},~\eqref{Eq_3D_R0^2}, respectively. First, we perform the eigenvalue decomposition $\mathbf{A} = \mathbf{U} \mathbf{Y} \mathbf{U}^T$ and introduce $\mathbf{t} = \sqrt{\mathbf{Y}} \mathbf{U}^T \left[ x, z \right]^T$, under which $M_0^{3\text{D}}$ can be rewritten as
\begin{equation}\label{Eq_3D_M0_t}
	\begin{aligned}
		M_0^{3\text{D}} &= \iint\limits_{\mathbf{t}^T \mathbf{t} \leq R_0^2}
		\frac{1}{2 \pi \left| \mathbf{Y} \mathbf{\Sigma}^{'2\text{D}} \right|^{\frac{1}{2}}} \\
		&\quad \times \exp \left( -\frac{\mathbf{t}^T \sqrt{\mathbf{Y}^{-1}} \mathbf{U}^T \left( \mathbf{\Sigma}^{'2\text{D}} \right)^{-1} \mathbf{U} \sqrt{\mathbf{Y}^{-1}} \mathbf{t}}{2} \right) d\mathbf{t}.
	\end{aligned}
\end{equation}
Define $\mathbf{H}^{-1} = \sqrt{\mathbf{Y}^{-1}} \mathbf{U}^T \left( \mathbf{\Sigma}^{'2\text{D}} \right)^{-1} \mathbf{U} \sqrt{\mathbf{Y}^{-1}}$, which admits the eigenvalue decomposition $\mathbf{H} = \mathbf{V} \mathbf{D} \mathbf{V}^T$ with $\mathbf{V}$ being an orthogonal matrix and $\mathbf{D} = \left[ \begin{smallmatrix} \lambda_1^2 & 0 \\ 0 & \lambda_2^2 \end{smallmatrix} \right]$ a diagonal matrix of eigenvalues satisfying $\lambda_1 \leq \lambda_2$. Let $\mathbf{q} = \mathbf{V}^T \mathbf{t}$ and substitute it into~\eqref{Eq_3D_M0_t}, we have
\begin{equation}\label{Eq_3D_M0_q}
	M_0^{3\text{D}} = \iint\limits_{\mathbf{q}^T \mathbf{q} \leq R_0^2}
	\frac{1}{2 \pi \left| \mathbf{D} \right|^{\frac{1}{2}}}\exp \left( -\frac{\mathbf{q}^T \mathbf{D}^{-1} \mathbf{q}}{2} \right) d\mathbf{q}.
\end{equation}
Equation~\eqref{Eq_3D_M0_q} can be divided into two cases according to whether $\lambda_1$ and $\lambda_2$ are identical.

\subsubsection{$\lambda_1 < \lambda_2$}
Define $q= \lambda_1 / \lambda_2 \in \left( 0,1 \right)$ and $\omega = \lambda_1^2 + \lambda_2^2$~\cite{jajszczyk2001digital}, so that~\eqref{Eq_3D_M0_q} can be expressed as~\cite{zhu2018new}
\begin{equation}\label{Eq_3D_M0_hoyt}
	M_0^{3\text{D}} = \int_0^{R_0} h\left( r; q, \omega \right) dr
\end{equation}
where $h \left( r; q, \omega \right)$ is the PDF of Hoyt distribution with $r^2 = \mathbf{q}^T \mathbf{q}$. Furthermore, the cumulative distribution function~(CDF) of Hoyt distribution can be computed as~\cite{lopez2015asymptotically}
\begin{equation}\label{Eq_CDF_Hoyt}
	\begin{aligned}
		H \left( r; q, \omega \right) &= \int_0^r h\left( r; q, \omega \right)  dr \\
		&= Q_1 \left( \frac{r \sqrt{1 - q^4}}{2 q \sqrt{\omega}} \sqrt{\frac{1 + q}{1 - q}}, \frac{r \sqrt{1 - q^4}}{2 q \sqrt{\omega}} \sqrt{\frac{1 - q}{1 + q}} \right) \\
		&- Q_1 \left( \frac{r \sqrt{1 - q^4}}{2 q \sqrt{\omega}} \sqrt{\frac{1 - q}{1 + q}}, \frac{r \sqrt{1 - q^4}}{2 q \sqrt{\omega}} \sqrt{\frac{1 + q}{1 - q}} \right)
	\end{aligned}
\end{equation}
where $Q_1 \left( a, b \right) = \int_b^{+\infty} x \exp \left( - \frac{x^2 + a^2}{2} \right) I_0 \left( a x \right) \, dx$ is the first order Marcum $Q$-function; $I_0 \left( x \right)$ is the first order modified Bessel function of first kind. Substituting~\eqref{Eq_CDF_Hoyt} into~\eqref{Eq_3D_M0_hoyt}, we have
\begin{equation}\label{Eq_3D_M0_lambda_unequal}
	\begin{aligned}
		M_0^{3\text{D}} &\! = \! Q_1 \! \left( \! \frac{R_0 \sqrt{\left( 1 - q^4 \right)}}{2 q \sqrt{\omega}} \sqrt{\frac{1 + q}{1 - q}},  \frac{R_0 \sqrt{\left( 1 - q^4 \right)}}{2 q \sqrt{\omega}} 	\sqrt{\frac{1 - q}{1 + q}} \right) \\
		&\! - \! Q_1 \! \left( \frac{R_0 \sqrt{\left( \! 1 - q^4 \right)}}{2 q \sqrt{\omega}} \sqrt{\frac{1 - q}{1 + q}},  \frac{R_0 \sqrt{\left( 1 - q^4 \right)}}{2 q \sqrt{\omega}} 	\sqrt{\frac{1 + q}{1 - q}} \right) \! .
	\end{aligned}
\end{equation}

\subsubsection{$\lambda_1 = \lambda_2$}
Define $\lambda = \lambda_1 = \lambda_2$, we can calculate~\eqref{Eq_3D_M0_q} as
\begin{equation}\label{Eq_3D_M0_lambda_equal}
	M_0^{3\text{D}} = \iint\limits_{\mathbf{q}^T \mathbf{q} \leq R_0^2} \frac{1}{2 \pi \lambda^2}\exp \left( -\frac{\mathbf{q}^T \mathbf{q}}{2\lambda^2} \right) d\mathbf{q} = 1 - \exp \left( - \frac{R_0^2}{2 \lambda^2} \right) \! .
\end{equation}


\section{Derivation of Closed-Form Expression of  $M_2^{3\text{D}}$}\label{Appendix_M2^3D}
According to~\eqref{Eq_3D_Cerg_infty}, $M_2^{3\text{D}}$ can be computed as
\begin{equation}\label{Eq_3D_M2_q}
	M_2^{3\text{D}} = \iint\limits_{\mathbf{q}^T \mathbf{q} \leq R_0^2}
	\mathbf{q}^T \mathbf{q} \cdot \frac{1}{2 \pi \left| \mathbf{D} \right|^{\frac{1}{2}}}\exp \left( -\frac{\mathbf{q}^T \mathbf{D}^{-1} \mathbf{q}}{2} \right) d\mathbf{q}
\end{equation}
and can also be divided into two cases based on if $\lambda_1 = \lambda_2$.

\setcounter{subsubsection}{0}
\subsubsection{$\lambda_1 < \lambda_2$}
Equation~\eqref{Eq_3D_M2_q} can be evaluated as
\begin{equation}\label{Eq_3D_M2_hoyt}
	M_2^{3\text{D}} = \int_0^{R_0^2} s f_s \left( s \right) ds 
\end{equation}
where $s  = \mathbf{q}^T \mathbf{q}$ and $f_s \left( s \right)$ denotes the PDF of $s$ as~\cite{lopez2015asymptotically}
\begin{equation}\label{Eq_3D_s_PDF}
	f_s \left( s \right) = \frac{1 + q^2}{2 q \omega} \exp \left( -\frac{\left( 1 + q^2 \right)^2}{4 q^2 \omega} s \right) I_0\left( \frac{1 - q^4}{4 q^2 \omega} s \right).
\end{equation}
Since $M_0^{3\text{D}}$ is available in closed form, the derivation of $M_2^{3\text{D}}$ can be simplified by first establishing a relationship between $M_0^{3\text{D}}$ and $M_2^{3\text{D}}$. We take the derivative of $f_s \left( s \right)$ based on the differentiation rule of the modified Bessel function as~\cite{abramowitz1965handbook}
\begin{equation}\label{Eq_3D_differentiation1}
	\begin{aligned}
		&\frac{df_s \left( s \right)}{ds} = -\frac{\left( 1 + q^2 \right)^2}{4 q^2 \omega} f_s \left( s \right) + \frac{\left( 1 + q^2 \right)^2 \left( 1 - q^2 \right)}{8 q^3 \omega^2} \\
		&\qquad \qquad \times \exp \left( -\frac{\left( 1 + q^2 \right)^2}{4 q^2 \omega} s \right) I_1 \left( \frac{1 - q^4}{4 q^2 \omega} s \right) .
	\end{aligned}
\end{equation}
Multiplying both sides of~\eqref{Eq_3D_differentiation1} by $s$ and then integrating over $\left[ 0, R_0^2\right]$, we apply integration by parts to the left-hand side of~\eqref{Eq_3D_differentiation1}, giving
\begin{equation}\label{Eq_3D_differentiation1_left}
	\begin{aligned}
		\int_0^{R_0^2} s \frac{df_s \left( s \right)}{ds} ds &= s f_s \left( s \right) \Big|_0^{R_0^2} - \int_0^{R_0^2} f_s \left( s \right) ds \\
		&= R_0^2 \, f_s \left( R_0^2 \right) - M_0^{3\text{D}}.
	\end{aligned}
\end{equation}
The right-hand side of~\eqref{Eq_3D_differentiation1} becomes
\begin{equation}\label{Eq_3D_differentiation1_right}
	\begin{aligned}
		&-\frac{\left( 1 + q^2 \right)^2}{4 q^2 \omega} \int_0^{R_0^2} s f_s \left( s \right) ds + \frac{\left( 1 + q^2 \right)^2 \left( 1 - q^2 \right)}{8 q^3 \omega^2} \\
		&\times \underbrace{\int_{0}^{R_0^2} \! s \exp \! \left( -\frac{\left( 1 + q^2 \right)^2}{4 q^2 \omega} s \right)  I_1 \left( \frac{1 - q^4}{4 q^2 \omega} s \right) \! ds}_{\Phi \left( R_0,q,\omega \right)} \\
		&= -\frac{\left( 1 + q^2 \right)^2}{4 q^2 \omega} M_2^{3\text{D}} + \frac{\left( 1 + q^2 \right)^2 \left( 1 - q^2 \right)}{8 q^3 \omega^2} \Phi \left( R_0,q,\omega \right)
	\end{aligned}
\end{equation}
Equating~\eqref{Eq_3D_differentiation1_left} and~\eqref{Eq_3D_differentiation1_right} and combining with~\eqref{Eq_3D_s_PDF}, we obtain
\begin{equation}\label{Eq_3D_differentiation1_s_result}
	\begin{aligned}
		& \frac{\left( 1 + q^2 \right)^2}{4 q^2 \omega} M_2^{3\text{D}} - M_0^{3\text{D}} = \frac{\left( 1 + q^2 \right)^2 \left( 1 - q^2 \right)}{8 q^3 \omega^2} \Phi \left( R_0,q,\omega \right) \\
		&- \frac{\left( 1 + q^2 \right) R_0^2 }{2 q \omega}  \exp \left( - \frac{\left( 1 + q^2 \right)^2 R_0^2}{4 q^2 \omega } \right) I_0 \left(  \frac{ \left( 1 - q^4 \right) R_0^2 }{4 q^2 \omega} \right).
	\end{aligned}
\end{equation}
Equation~\eqref{Eq_3D_differentiation1_s_result} establishes the relationship between $M_0^{3\text{D}}$ and $M_2^{3\text{D}}$, but it still involves integrals, i.e., $\Phi \left( R_0,q,\omega \right)$. To evaluate it, we take the derivative of its integrand as~\cite{abramowitz1965handbook}
\begin{equation}\label{Eq_3D_differentiation2}
	\begin{aligned}
		&\frac{d}{ds} \left[ s \exp \left( -\frac{\left( 1 + q^2 \right)^2}{4 q^2 \omega} s \right) I_1 \left( \frac{1 - q^4}{4 q^2 \omega} s \right) \right] \\
		=& \frac{1 - q^4}{4 q^2 \omega} s \exp \left( -\frac{\left( 1 + q^2 \right)^2}{4 q^2 \omega} s \right) I_0 \left( \frac{1 - q^4}{4 q^2 \omega} s \right) \\
		-&\frac{\left( 1 + q^2 \right)^2}{4 q^2 \omega} s \exp \left( -\frac{\left( 1 + q^2 \right)^2}{4 q^2 \omega} s \right) I_1 \left( \frac{1 - q^4}{4 q^2 \omega} s \right).
	\end{aligned}
\end{equation}
Integrating both sides of~\eqref{Eq_3D_differentiation2} over $\left[ 0, R_0^2\right]$ and applying the Newton-Leibniz theorem to the left-hand side of~\eqref{Eq_3D_differentiation2} gives
\begin{equation}\label{Eq_3D_differentiation2_left}
	\begin{aligned}
		&\int_0^{R_0^2} \frac{d}{ds} \left[ s \exp \left( -\frac{\left( 1 + q^2 \right)^2}{4 q^2 \omega} s \right) I_1 \left( \frac{1 - q^4}{4 q^2 \omega} s \right) \right] ds \\
		= &\left. s \exp \left( -\frac{\left( 1 + q^2 \right)^2}{4 q^2 \omega} s \right) I_1 \left( \frac{1 - q^4}{4 q^2 \omega} s \right)  \right|_0^{R_0^2} \\
		= & \, R_0^2 \exp \left( -\frac{\left( 1 + q^2 \right)^2 R_0^2}{4 q^2 \omega} \right) I_1 \left( \frac{\left( 1 - q^4 \right) R_0^2}{4 q^2 \omega} \right),
	\end{aligned}
\end{equation}
while the right-hand side of~\eqref{Eq_3D_differentiation2} becomes
\begin{equation}\label{Eq_3D_differentiation2_right}
	\begin{aligned}
		& \frac{1 - q^4}{4 q^2 \omega} \int_0^{R_0^2} s \exp \left( -\frac{\left( 1 + q^2 \right)^2}{4 q^2 \omega} s \right) I_0 \left( \frac{1 - q^4}{4 q^2 \omega} s \right) ds \\
		-&\frac{\left( 1 + q^2 \right)^2}{4 q^2 \omega} \int_0^{R_0^2}  s \exp \left( -\frac{\left( 1 + q^2 \right)^2}{4 q^2 \omega} s \right) I_1 \left( \frac{1 - q^4}{4 q^2 \omega} s \right) ds \\
		=& \frac{1 - q^2}{2 q} M_2^{3\text{D}} - \frac{\left( 1 + q^2 \right)^2}{4 q^2 \omega} \Phi \left( R_0,q,\omega \right).
	\end{aligned}
\end{equation}
Equating \eqref{Eq_3D_differentiation2_left} and \eqref{Eq_3D_differentiation2_right} 
to solve $\Phi \left( R_0,q,\omega \right)$, we obtain
\begin{equation}\label{Eq_3D_differentiation2_result}
	\begin{aligned}
		&\Phi \left( R_0,q,\omega \right) = \frac{2 q \omega \left(1 - q^2 \right)}{\left( 1 + q^2 \right)^2} M_2^{3\text{D}} \\
		&- \frac{4 q^2 \omega R_0^2}{\left( 1 + q^2 \right)^2} \exp  \left( -\frac{\left( 1 + q^2 \right)^2 R_0^2}{4 q^2 \omega}  \right) I_1 \left( \frac{ \left( 1 - q^4 \right) R_0^2 }{4 q^2 \omega}\right) \! .
	\end{aligned}
\end{equation}
Substituting~\eqref{Eq_3D_differentiation2_result} into~\eqref{Eq_3D_differentiation1_s_result} to eliminate $\Phi \left( R_0,q,\omega \right)$, we can calculate $M_2^{3\text{D}}$ in closed form as
\begin{equation}\label{Eq_3D_M2_lambda_unequal}
	\begin{aligned}
		&M_2^{3\text{D}} =   M_0^{3\text{D}} \omega - \frac{2 q \omega R_0^2}{1 + q^2} \exp \left( -\frac{\left( 1 + q^2 \right)^2 R_0^2}{4 q^2 \omega} \right) \\
		&\times \! \left[ \frac{\left( 1 + q^2 \right)^2}{4 q^2 \omega} I_0 \! \left( \frac{\left( 1 - q^4\right) \! R_0^2}{4 q^2 \omega} \right) \! + \! \frac{1 - q^4}{4 q^2 \omega} I_1 \! \left( \! \frac{\left( 1 - q^4\right) \! R_0^2}{4 q^2 \omega}\! \right) \! \right]\!.
	\end{aligned}
\end{equation}

\subsubsection{$\lambda_1 = \lambda_2$}
Define $\lambda = \lambda_1 = \lambda_2$, we can calculate~\eqref{Eq_3D_M2_q} as
\begin{equation}\label{Eq_3D_M2_lambda_equal}
	\begin{aligned}
		M_2^{3\text{D}} &= \iint\limits_{\mathbf{q}^T \mathbf{q} \leq R_0^2} \mathbf{q}^T \mathbf{q} \cdot \frac{1}{2 \pi \lambda^2}\exp \left( -\frac{\mathbf{q}^T \mathbf{q}}{2\lambda^2} \right) d\mathbf{q} \\
		&= \frac{1}{\lambda^2} \int_0^{R_0} r^3 \exp \left( -\frac{r^2}{2 \lambda^2} \right) dr \\
		&= 2 \lambda^2 \left[ 1 - \left( 1 + \frac{R_0^2}{2 \lambda^2} \right) \exp \left( -\frac{R_0^2}{2 \lambda^2} \right) \right].
	\end{aligned}
\end{equation}


\section{Derivation of Closed-Form Expression of  $M_4^{3\text{D}}$}\label{Appendix_M4^3D}

With $M_0^{3\text{D}}$ and $M_2^{3\text{D}}$ in closed-form derived, $M_4^{3\text{D}}$ can be obtained as follows. First, $M_4^{3\text{D}}$ can be rewritten as
\begin{equation}\label{Eq_3D_M4_q}
	M_4^{3\text{D}} = \iint\limits_{\mathbf{q}^T \mathbf{q} \leq R_0^2}
	\left( \mathbf{q}^T \mathbf{q} \right)^2 \cdot \frac{1}{2 \pi \left| \mathbf{D} \right|^{\frac{1}{2}}}\exp \left( -\frac{\mathbf{q}^T \mathbf{D}^{-1} \mathbf{q}}{2} \right) d\mathbf{q}.
\end{equation}
Equation~\eqref{Eq_3D_M4_q} can also be categorized into two cases based on whether $\lambda_1 = \lambda_2$.

\setcounter{subsubsection}{0}
\subsubsection{$\lambda_1 < \lambda_2$}
Similar to~\eqref{Eq_3D_M2_hoyt}, we rewrite~\eqref{Eq_3D_M4_q} as
\begin{equation}\label{Eq_3D_M4_s}
	M_4^{3\text{D}} = \int_0^{R_0^2} s^2 f_s \left( s \right) ds.
\end{equation}
To evaluate~\eqref{Eq_3D_M4_s}, it is first necessary to construct the expression of its integrand, i.e., $s^2 f_s \left( s \right)$. We multiply both sides of~\eqref{Eq_3D_differentiation1} by $s^2$ and integrate over $\left[ 0, R_0^2\right]$, and the left-hand side of~\eqref{Eq_3D_differentiation1} is evaluated by integration by parts, giving
\begin{equation}\label{Eq_3D_differentiation1_s2_left}
	\begin{aligned}
		\int_0^{R_0^2} s^2 \frac{df_s \left( s \right)}{ds} ds &= s^2 f_s \left( s \right) \Big|_0^{R_0^2} - 2 \int_0^{R_0^2} s f_s \left( s \right) ds \\
		&= R_0^4 \, f_s \left( R_0^2 \right) - 2 M_2^{3\text{D}},
	\end{aligned}
\end{equation}
while the right-hand side of~\eqref{Eq_3D_differentiation1} can be written as
\begin{equation}\label{Eq_3D_differentiation1_s2_right}
	\begin{aligned}
		&-\frac{\left( 1 + q^2 \right)^2}{4 q^2 \omega} \int_0^{R_0^2} s^2 f_s \left( s \right) ds + \frac{\left( 1 + q^2 \right)^2 \left( 1 - q^2 \right)}{8 q^3 \omega^2} \\
		&\times \underbrace{\int_{0}^{R_0^2}  s^2 \exp \left( -\frac{\left( 1 + q^2 \right)^2}{4 q^2 \omega} s \right)  I_1 \left( \frac{1 - q^4}{4 q^2 \omega} s \right) \! ds}_{\Psi \left( R_0,q,\omega \right)} \\
		&= -\frac{\left( 1 + q^2 \right)^2}{4 q^2 \omega} M_4^{3\text{D}} + \frac{\left( 1 + q^2 \right)^2 \left( 1 - q^2 \right)}{8 q^3 \omega^2} \Psi \left( R_0,q,\omega \right).
	\end{aligned}
\end{equation}
Equating~\eqref{Eq_3D_differentiation1_s2_left} and~\eqref{Eq_3D_differentiation1_s2_right} and substituting~\eqref{Eq_3D_s_PDF} gives
\begin{equation}\label{Eq_3D_differentiation1_s2_result}
	\begin{aligned}
		& \frac{\left( 1 + q^2 \right)^2}{4 q^2 \omega} M_4^{3\text{D}} - 2 M_2^{3\text{D}} = \frac{\left( 1 + q^2 \right)^2 \left( 1 - q^2 \right)}{8 q^3 \omega^2} \Psi \left( R_0,q,\omega \right) \\
		&- \frac{\left( 1 + q^2 \right) R_0^4 }{2 q \omega}  \exp \left( - \frac{\left( 1 + q^2 \right)^2 R_0^2}{4 q^2 \omega } \right) I_0 \left(  \frac{ \left( 1 - q^4 \right) R_0^2 }{4 q^2 \omega} \right).
	\end{aligned}
\end{equation}
Equation~\eqref{Eq_3D_differentiation1_s2_result} establishes the relationship between $M_2^{3\text{D}}$ and $M_4^{3\text{D}}$, but $\Psi \left( R_0,q,\omega \right)$ still involves integrals to be calculated. Multiplying both sides of~\eqref{Eq_3D_differentiation2} by $s$ and integrating over $\left[ 0, R_0^2\right]$, we can compute the left-hand side of~\eqref{Eq_3D_differentiation2} with integration by parts as
\begin{equation}\label{Eq_3D_differentiation2_s2_left}
	\begin{aligned}
		&\int_0^{R_0^2} s \frac{d}{ds} \left[ s \exp \left( -\frac{\left( 1 + q^2 \right)^2}{4 q^2 \omega} s \right) I_1 \left( \frac{1 - q^4}{4 q^2 \omega} s \right) \right] ds \\
		= &\left. s^2 \exp \left( -\frac{\left( 1 + q^2 \right)^2}{4 q^2 \omega} s \right) I_1 \left( \frac{1 - q^4}{4 q^2 \omega} s \right)  \right|_0^{R_0^2} \\
		-& \int_0^{R_0^2} s \exp \left( -\frac{\left( 1 + q^2 \right)^2}{4 q^2 \omega} s \right) I_1 \left( \frac{1 - q^4}{4 q^2 \omega} s \right) ds \\
		=& R_0^4 \exp \left( \! -\frac{\left( 1 + q^2 \right)^2 \! R_0^2}{4 q^2 \omega} \! \right) I_1 \! \left( \! \frac{ \left( 1 - q^4 \right) R_0^2}{4 q^2 \omega} \! \right) \! - \! \Phi \left( R_0,q,\omega \right) \! .
	\end{aligned}
\end{equation}
The right-hand side of~\eqref{Eq_3D_differentiation2} simplifies to
\begin{equation}\label{Eq_3D_differentiation2_s2_right}
	\begin{aligned}
		& \frac{1 - q^4}{4 q^2 \omega} \int_0^{R_0^2} s^2 \exp \left( -\frac{\left( 1 + q^2 \right)^2}{4 q^2 \omega} s \right) I_0 \left( \frac{1 - q^4}{4 q^2 \omega} s \right) ds \\
		-&\frac{\left( 1 + q^2 \right)^2}{4 q^2 \omega} \int_0^{R_0^2}  s^2 \exp \left( -\frac{\left( 1 + q^2 \right)^2}{4 q^2 \omega} s \right) I_1 \left( \frac{1 - q^4}{4 q^2 \omega} s \right) ds \\
		=& \frac{1 - q^2}{2 q} M_4^{3\text{D}} - \frac{\left( 1 + q^2 \right)^2}{4 q^2 \omega} \Psi \left( R_0,q,\omega \right).
	\end{aligned}
\end{equation}
Setting~\eqref{Eq_3D_differentiation2_s2_left} equal to~\eqref{Eq_3D_differentiation2_s2_right} and solving for $\Psi \left( R_0,q,\omega \right)$, we have
\begin{equation}\label{Eq_3D_differentiation2_s2_result}
	\begin{aligned}
		&\Psi \left( R_0,q,\omega \right) = \frac{4 q^2 \omega}{\left( 1 + q^2 \right)^2} \Phi \left( R_0,q,\omega \right) + \frac{2 q \omega \left( 1 - q^2 \right)}{ \left( 1 + q^2 \right)^2} M_4^{3\text{D}}  \\
		&- \frac{4 q^2 \omega R_0^4}{\left( 1 + q^2 \right)^2} \exp \! \left( \! -\frac{\left( 1 + q^2 \right)^2 \! R_0^2}{4 q^2 \omega} \right) \! I_1 \left(  \frac{ \left( 1 - q^4 \right) R_0^2 }{4 q^2 \omega} \right) \! .
	\end{aligned}
\end{equation}
Substituting~\eqref{Eq_3D_differentiation2_s2_result} into~\eqref{Eq_3D_differentiation1_s2_result} to eliminate $\Psi \left( R_0,q,\omega \right)$, we can calculate $M_4^{3\text{D}}$ in closed form as
\begin{equation}\label{Eq_3D_M4_lambda_unequal}
	\begin{aligned}
		M_4^{3\text{D}} \! &= \! \frac{\left( 3 q^4 \! + \! 2 q^2 \! + \! 3 \right) \! \omega}{ \left( 1 + q^2 \right)^2} M_2^{3\text{D}} \! - \! \frac{2 q \omega R_0^2}{1 + q^2} \exp \! \left( \! -\frac{\left( 1 + q^2 \right)^{\! 2} \! R_0^2}{4 q^2 \omega} \! \right) \\ 
		&\times \left[  \frac{\left( 1 + q^2 \right)^2 R_0^2}{4 q^2 \omega} I_0 \left( \frac{\left( 1 - q^4\right) R_0^2}{4 q^2 \omega} \right) \right. \\
		&\quad \left. + \frac{1 - q^4}{4 q^2 \omega} \left( R_0^2 + \frac{4 q^2 \omega}{\left( 1 + q^2 \right)^2} \right) I_1  \left( \frac{\left( 1 - q^4\right) R_0^2}{4 q^2 \omega} \right) \right].
	\end{aligned}
\end{equation}

\subsubsection{$\lambda_1 = \lambda_2$}
Define $\lambda = \lambda_1 = \lambda_2$, we can calculate~\eqref{Eq_3D_M4_q} as
\begin{equation}\label{Eq_3D_M4_lambda_equal}
	\begin{aligned}
		M_4^{3\text{D}} &= \ \iint\limits_{\mathbf{q}^T \mathbf{q} \leq R_0^2}
		\left( \mathbf{q}^T \mathbf{q} \right)^2 \cdot \frac{1}{2 \pi \left| \mathbf{D} \right|^{\frac{1}{2}}}\exp \left( -\frac{\mathbf{q}^T \mathbf{D}^{-1} \mathbf{q}}{2} \right) d\mathbf{q} \\
		&= \frac{1}{2 \lambda^2} \int_0^{R_0} r^5 \exp \left( -\frac{r^2}{2 \lambda^2} \right) dr \\
		&= 8 \lambda^4 \left[ 1 - \left( 1 + \frac{R_0^2}{2 \lambda^2} + \frac{R_0^4}{8 \lambda^4} \right) \exp \left( -\frac{R_0^2}{2 \lambda^2} \right) \right].
	\end{aligned}
\end{equation}


\bibliographystyle{IEEEtran}
\bibliography{REF_paper2}

@article{thallapalli2026gridless,
	title={Gridless Sparse Channel Estimation for Dual Wideband THz Ultra-Massive {MIMO} Systems},
	author={Thallapalli, Soujanya and Sen, Debarati},
	journal={IEEE Trans. Wireless Commun.},
	volume={25},
	pages={10035--10050},
	year={2026},
	publisher={IEEE}
}

@article{zhang2026fundamental,
	title={On fundamental limits for fluid antenna-assisted integrated sensing and communications for unsourced random access},
	author={Zhang, Zhentian and Wong, Kai-Kit and Dang, Jian and Zhang, Zaichen and Chae, Chan-Byoung},
	journal={IEEE J. Sel. Areas Commun.},
	year={2026},
	publisher={IEEE}
}

@article{tomasin2025two,
  title={Two-Way Protocol for Downlink Channel State Information Reporting in {FDD} Networks},
  author={Tomasin, Stefano and Merlo, Simone},
  journal={IEEE Trans. Veh. Technol.},
  year={2025},
  volume={74},
  number={4},
  pages={6829--6833},
  publisher={IEEE}
}

@article{becirovic2022combining,
  title={Combining reciprocity and {CSI} feedback in {MIMO} systems},
  author={Becirovic, Ema and Bj{\"o}rnson, Emil and Larsson, Erik G},
  journal={IEEE Trans. Wireless Commun.},
  volume={21},
  number={11},
  pages={10065--10080},
  year={2022},
  publisher={IEEE}
}

@article{talvitie2019positioning,
	title={Positioning and location-aware communications for modern railways with {5G} new radio},
	author={Talvitie, Jukka and Levanen, Toni and Koivisto, Mike and Ihalainen, Tero and Pajukoski, Kari and Valkama, Mikko},
	journal={IEEE Commun. Mag.},
	volume={57},
	number={9},
	pages={24--30},
	year={2019},
	publisher={IEEE}
}

@article{zhang2026positioning,
	title={Positioning-Aided Channel Estimation for Multi-{LEO} Satellite Cooperative Beamforming},
	author={Zhang, Yuchen and Zheng, Pinjun and Ma, Jie and Wymeersch, Henk and Al-Naffouri, Tareq Y},
	journal={IEEE Trans. Commun.}, 
	year={2026},
	volume={74},
	pages={3888-3903},
	publisher={IEEE}
}

@inproceedings{maiberger2010location,
	title={Location based beamforming},
	author={Maiberger, Roy and Ezri, Doron and Erlihson, Michael},
	booktitle={Proc. IEEE 26th Conv. Electr. Electron. Eng. Israel},
	pages={184--187},
	year={2010},
}

@article{muppirisetty2018location,
	title={Location-aided pilot contamination avoidance for massive {MIMO} systems},
	author={Muppirisetty, L Srikar and Charalambous, Themistoklis and Karout, Johnny and Fodor, G{\'a}bor and Wymeersch, Henk},
	journal={IEEE Trans. Wireless Commun.},
	volume={17},
	number={4},
	pages={2662--2674},
	year={2018},
	publisher={IEEE}
}

@article{talvitie2020beamformed,
	title={Beamformed radio link capacity under positioning uncertainty},
	author={Talvitie, Jukka and Levanen, Toni and Koivisto, Mike and Ihalainen, Tero and Pajukoski, Kari and Valkama, Mikko},
	journal={IEEE Trans. Veh. Technol.},
	volume={69},
	number={12},
	pages={16235--16240},
	year={2020},
	publisher={IEEE}
}

@article{shi2023joint,
	title={Joint beamformer design and power allocation method for hybrid {RF-VLCP} system},
	author={Shi, Shengnan and Gui, Guan and Lin, Yun and Yuen, Chau and Dobre, Octavia A and Adachi, Fumiyuki},
	journal={IEEE Internet Things J.},
	volume={11},
	number={5},
	pages={7878--7892},
	year={2023},
	publisher={IEEE}
}

@article{song2024position,
	title={Position-based adaptive beamforming and roadside unit sectorization for {V2I} communications},
	author={Song, Jiho and Lee, Jong-Ho and Noh, Song},
	journal={IEEE Trans. Veh. Technol.},
	volume={73},
	number={2},
	pages={2960--2965},
	year={2024},
	publisher={IEEE}
}

@article{he2026bistatic,
	title={Bistatic-Enhancement {MIMO} {ISAC}: Joint Beamforming Design in Cell-Free Communication and Bistatic Radar Systems},
	author={He, Boxin and Mao, Wencan and Liu, Yaxi and Huangfu, Wei and Wang, Fangxin and Zhang, Haijun},
	journal={IEEE Trans. Wireless Commun.},
	year={2026},
	volume={25},
	pages={5935--5951},
	publisher={IEEE}
}

@article{jing2026beamforming,
	title={Beamforming Designs for Multiple {UAV} Interference Systems With {LOS} Channels},
	author={Jing, Yindi and Yu, Xinwei},
	journal={IEEE Trans. Commun.},
	volume={74},
	pages={1336--1349},
	year={2026},
	publisher={IEEE}
}

@article{cao2026location,
	title={Location-Sensing-Based Beamforming for Multi-{IRS}-Aided {OFDM} Communication Systems},
	author={Cao, Xueyan and Cui, Jun and Wang, Shubin},
	journal={IEEE Trans. Green Commun. Networking},
	year={2026},
	volume={10},
	pages={2232--2246},
	publisher={IEEE}
}

@article{zhang2025channel,
	title={Channel measurements and modeling for dynamic vehicular {ISAC} scenarios at 28 {GHz}},
	author={Zhang, Zhengyu and He, Ruisi and Ai, Bo and Yang, Mi and Zhang, Xuejian and Qi, Ziyi and Yuan, Yuan},
	journal={IEEE Trans. Commun.},
	volume={73},
	number={8},
	pages={6884--6897},
	year={2025},
	publisher={IEEE}
}

@article{srinivasan2021airplane,
	title={Airplane-aided integrated next-generation networking},
	author={Srinivasan, Muralikrishnan and Gopi, Sarath and Kalyani, Sheetal and Huang, Xiaojing and Hanzo, Lajos},
	journal={IEEE Trans. Veh. Technol.},
	volume={70},
	number={9},
	pages={9345--9354},
	year={2021},
	publisher={IEEE}
}

@article{lu2020positioning,
	title={Positioning-aided {3D} beamforming for enhanced communications in mmWave mobile networks},
	author={Lu, Yi and Koivisto, Mike and Talvitie, Jukka and Valkama, Mikko and Lohan, Elena Simona},
	journal={IEEE Access},
	volume={8},
	pages={55513--55525},
	year={2020},
	publisher={IEEE}
}

@book{milligan2005modern,
  title={Modern antenna design},
  author={Milligan, Thomas A},
  year={2005},
  publisher={John Wiley \& Sons}
}

@article{friis1946note,
	title={A note on a simple transmission formula},
	author={Friis, Harald T},
	journal={Proceedings of the IRE},
	volume={34},
	number={5},
	pages={254--256},
	year={1946},
	publisher={IEEE}
}

@book{skolnik1980introduction,
	title={Introduction to radar systems},
	author={Skolnik, Merrill Ivan and others},
	volume={3},
	year={1980},
	publisher={McGraw-hill New York}
}

@article{hyun2021adaptive,
	title={Adaptive beam design for {V2I} communications using vehicle tracking with extended {Kalman filter}},
	author={Hyun, Seong-Hwan and Song, Jiho and Kim, Keunwoo and Lee, Jong-Ho and Kim, Seong-Cheol},
	journal={IEEE Trans. Veh. Technol.},
	volume={71},
	number={1},
	pages={489--502},
	year={2021},
	publisher={IEEE}
}

@article{jajszczyk2001digital,
  title={Digital Communication Over Fading Channels: A Unified Approach to Performance Analysis},
  author={Jajszczyk, Andrzej and Maseng, Torleiv},
  journal={IEEE Commun. Mag.},
  volume={39},
  number={7},
  pages={36--36},
  year={2001}
}

@article{zhu2018new,
  title={A new asymptotic analysis technique for diversity receptions over correlated lognormal fading channels},
  author={Zhu, Bingcheng and Cheng, Julian and Yan, Jun and Wang, Jin-yuan and Wu, Lenan and Wang, Yongjin},
  journal={IEEE Trans. Commun.},
  volume={66},
  number={2},
  pages={845--861},
  year={2018},
  publisher={IEEE}
}

@article{lopez2015asymptotically,
  title={Asymptotically exact approximations for the symmetric difference of generalized {Marcum} {$Q$}-functions},
  author={Lopez-Martinez, F Javier and Romero-Jerez, Juan M},
  journal={IEEE Trans. Veh. Technol.},
  volume={64},
  number={5},
  pages={2154--2159},
  year={2015},
  publisher={IEEE}
}

@book{abramowitz1965handbook,
	title={Handbook of mathematical functions: with formulas, graphs, and mathematical tables},
	author={Abramowitz, Milton and Stegun, Irene A},
	volume={55},
	year={1965},
	pages={376},
	publisher={Courier Corporation}
}

@article{mahony2012multirotor,
	title={Multirotor aerial vehicles: Modeling, estimation, and control of quadrotor},
	author={Mahony, Robert and Kumar, Vijay and Corke, Peter},
	journal={IEEE Rob. Autom. Mag.},
	volume={19},
	number={3},
	pages={20--32},
	year={2012},
	publisher={IEEE}
}

@article{wang2021joint,
	title={Joint beam training and positioning for intelligent reflecting surfaces assisted millimeter wave communications},
	author={Wang, Wei and Zhang, Wei},
	journal={IEEE Trans. Wireless Commun.},
	volume={20},
	number={10},
	pages={6282--6297},
	year={2021},
	publisher={IEEE}
}

@article{zhang2024design,
	title={Design and outage analysis for {VLP}-assisted indoor laser communication systems},
	author={Zhang, Kehan and Zhang, Zaichen and Zhu, Bingcheng and Chen, Shengjian and Dang, Jian and Wu, Liang and Wang, Lei},
	journal={IEEE Trans. Commun.},
	volume={72},
	number={6},
	pages={3495--3510},
	year={2024},
	publisher={IEEE}
}

@article{zhu2022outage,
	title={Outage analysis and beamwidth optimization for positioning-assisted beamforming},
	author={Zhu, Bingcheng and Zhang, Zaichen and Cheng, Julian},
	journal={IEEE Commun. Lett.},
	volume={26},
	number={7},
	pages={1543--1547},
	year={2022},
	publisher={IEEE}
}

@article{liu2026joint,
	title={Joint Gaussian Beam Pattern and Its Optimization for Positioning-Assisted Systems},
	author={Liu, Yuanbo and Zhu, Bingcheng and Huang, Shuojin and Zhang, Han and Zhang, Zaichen},
	journal={arXiv: 2603.03940},
	publisher={IEEE}
}

\end{document}